\newcommand{\EXAMPLE}[1]{%
  \begin{center}\fbox{\begin{minipage}{0.95\textwidth}{\sf{}#1}\end{minipage}}%
  \end{center}}
\newtheorem{theorem}{Theorem}[section]
\newtheorem{definition}[theorem]{Definition}
\newtheorem{proposition}[theorem]{Proposition}
\def\kk{\mathbb{K}}
\def\CC{\mathbb{C}}
\def\NN{\mathbb{N}}
\def\RR{\mathbb{R}}
\def\OO{\mathcal{O}}
\def\Sc{\mathcal{S}}
\def\x{\mathbf{x}}
\def\db{\mathbf{d}}
\def\hb{\mathbf{h}}
\def\Proj{\mathbb{P}}
\def\Ac{\mathcal{A}}
\def\AA{\mathcal{A}_{\Lambda}}
\def\<{\langle}
\def\>{\rangle}
\newcommand{\dual}[1]{{#1}^{*}}
\def\dAA{\dual{\mathcal{A}}_{\Lambda}}
\def\dRR{\dual{ R}}
\def\Hom{\operatorname{Hom}}
\def\rank{\operatorname{rank}}
\def\mat#1{\mathbbmss#1}
\def\form#1{\mathcal#1}
\def\abs#1{\mathopen| #1 \mathclose|}
\def\set#1{\{ #1 \}}
\def\Set#1{\left\{ #1 \right\}}
\def\Bigbar#1{\mathrel{\left|\vphantom{#1}\right.}}
\def\Setbar#1#2{\Set{#1 \Bigbar{#1 #2} #2}}
\title{Symmetric tensor decomposition}
\author{Jerome Brachat
\and Pierre Comon
  \and Bernard Mourrain
  \and Elias P. Tsigaridas
}
\begin{document}
\maketitle

\begin{abstract}
We present an algorithm for decomposing a symmetric tensor, of dimension n
and order d as a sum of rank-1 symmetric tensors, extending the algorithm of
Sylvester devised in 1886 for binary forms. 

We recall the correspondence between the decomposition of a homogeneous
polynomial in n variables of total degree d as a sum of powers of linear
forms (Waring's problem), incidence properties on secant varieties of the
Veronese Variety and the representation of linear forms as a
linear combination of evaluations at distinct points. Then we reformulate
Sylvester's approach from the dual point of view.

Exploiting this duality, we propose necessary and sufficient conditions for
the existence of such a decomposition of a given rank, using the properties of
Hankel (and quasi-Hankel) matrices, derived from multivariate polynomials and
normal form computations. This leads to the resolution of polynomial
equations of small degree in non-generic cases.

We propose a new algorithm for symmetric tensor decomposition, based on
this characterization and on linear algebra computations with these Hankel
matrices. 

The impact of this contribution is two-fold. First it permits an efficient
computation of the decomposition of any tensor of sub-generic rank, as
opposed to widely used iterative algorithms with unproved global convergence
(e.g. Alternate Least Squares or gradient descents). Second, it gives tools
for understanding uniqueness conditions, and for detecting the rank.
\end{abstract}

\tableofcontents
\newpage

\section{Introduction}
Symmetric tensors show up in applications mainly as high-order derivatives of multivariate
functions. For instance in Statistics, cumulant tensors are derivatives of the second
characteristic function \cite{Mccu87}. 

Tensors have been widely utilized in Electrical Engineering since the nineties
\cite{SwamGS94:ieeesp}, and in particular in Antenna Array Processing
\cite{DogaM95a:ieeesp} \cite{ChevAFC05:ieeesp} or Telecommunications
\cite{VandP96:ieeesp} \cite{Chev99:SP} \cite{SidiGB00:ieeesp} \cite{FerrC00:ieeesp}
\cite{DelaC07:SP}.  
Even earlier in the seventies, tensors have been used in Chemometrics  \cite{Bro97:cils}
or Psychometrics \cite{KierK91:psy}. 

Arithmetic complexity is also an important field where the understanding of tensor
decompositions has made a lot of progress, especially third order tensors, which represent
bilinear maps  \cite{Krus77:laa}  \cite{BiniCRL79:ipl} \cite{Stra83:laa}
\cite{Land08:bams}. 

Another important application field is Data Analysis. For instance, Independent Component
Analysis, originally introduced for symmetric tensors whose rank did not exceed dimension
\cite{Como92:elsevier} \cite{Card98:procieee}. Now, it has become possible to estimate
more factors than the dimension \cite{Dono01:ieeeit} \cite{JianS04:ieeesp}. In some
applications, tensors may be symmetric only in some modes \cite{ComoR06:SP}, or may not be
symmetric nor have equal dimensions \cite{CichA02} \cite{SmilBG04}. 
Further numerous applications of tensor decompositions may be found in \cite{CichA02} \cite{SmilBG04}.

Note that in some cases, tensors are encountered in the form of a collection of symmetric
matrices \cite{Dele78:psy} \cite{Flur84:jasa} \cite{VandP96:ieeesp} \cite{PhamC01:ieeesp}
\cite{TenbSR04:laa}, in which case they may enjoy symmetries in some modes but not in
others. Conversely, some algorithms treat symmetric tensors as a collection of symmetric
matrix slices \cite{Yere02:ieeesp} \cite{ZiehNM04:jmlr} \cite{DelaCC07:ieeesp}. 

The problem of decomposition of a symmetric tensor, that we consider in this
paper, is a rank determinant problem which extends the Singular Value Decomposition (SVD)
problem for symmetric matrices. This former method is an important tool in (numerical)
linear algebra, which received at lot of attention and which is 
routinely used in many applications \cite{gvl-book-1983}. As exhibited above, the
extension to general symmetric tensors also appears in many application
domains. However, many issues either theoretical or algorithmic remains
unsolved. Among solved problems, let us mention the determination of the
minimal number of terms in the decomposition of a generic tensor
\cite{AlHi92a}, which is stated there in terms of a dual interpolation
problem. See \cite[chap. 2]{ia-book-1999} and section~\ref{sec:prob-formulation} for the
link between these two points of view. Among open problems are the
determination of the maximal rank of tensors of given degree and
dimension, or the determination of the stratification of the set of symmetric tensors
by the rank. See however \cite{cm-binary-arxiv-2001} for an answer in the
binary case. For a detailed presentation of the symmetric tensor
decomposition problem, from a projective algebraic geometric point of view, we refer
to \cite{ia-book-1999}. The properties of so-called catalecticant matrices, related to
the apolar duality induced by the symmetric tensor associated with homogeneous polynomials
of a given degree, are extensively studied. 

In a disconnected way, probably because of language barrier, investigations
of this problem in numerical analysis have been developed, inspired by the
successful work on order 2 tensors, i.e. matrices. However, despite their obvious
practical interest, numerical algorithms presently used in most scientific
communities are suboptimal, in the sense that they either do not fully
exploit symmetries \cite{AlbeFCC04:laa}, minimize different successive
criteria sequentially \cite{Yere02:ieeesp} \cite{DelaCC07:ieeesp}, or are
iterative and lack a guarantee of global convergence \cite{HarsL94:csda}
\cite{Paat99:jcgs}. In addition, they often request the rank to be much
smaller than generic.  Among these popular methods, we refer to PARAFAC
techniques \cite{Bro97:cils}, heavily applied for solving an ill-posed
problem\dots\ Indeed contrarily to the matrix case, the set of symmetric tensors
of rank $\le r$ is not closed, and its closure has singularities
corresponding to tensors of rank $>r$. This explains why iterative numerical
methods encounter difficulties to compute a tensor decomposition. 
For more details on open problems on symmetric tensors, see \cite{cglm-simax-2008}.

The goal of this paper is to describe a new algorithm able to decompose a symmetric
tensor of arbitrary order and dimension into a sum of rank-one terms. 
The algorithm proposed in this paper is inspired from Sylvester's theorem \cite{KungR84:ams},
and extends its principle to larger dimensions. 
Using apolar duality on polynomials, we show that the symmetric tensor
decomposition problem reduces to the decomposition of a linear form as a
linear combination of evaluations at distinct points. 
We give a necessary and sufficient condition for the existence of a 
decomposition of rank $r$, based on rank conditions of Hankel operators
or commutation properties.
Instead of working, degree by degree, as in \cite{ia-book-1999}, we consider
affine situations in order to treat at the same time the various
homogeneous components. 
In the binary case, the decomposition problem can be solved directly by
computing ranks of catalecticant. In higher dimension, this not so simple.
An extension step is required to find the decomposition.
This leads to the resolution of a polynomial system of small
degree, from which we deduce the decomposition by solving a simple eigenvalue problem,
thanks to linear algebra manipulations.

The algorithm is not restricted to strictly sub-generic ranks as for the method
proposed in \cite{ia-book-1999}[chap. 5]. 
In sub-generic cases, the decomposition is essentially unique (i.e. up to scale
and permutation) when some rank conditions are satisfied. 
Our algorithm fully exploits this symmetry 
and provides a complete answer to the questions of uniqueness and
computation, for any order \cite{cm-sp-1996}.

In the following section, we recall the method deduced from  Sylvester's
theorem to decompose a binary form. In section \ref{sec:prob-formulation}, we give three equivalent
formulations of the same problem, used and studied in different communities.
In section \ref{sec:4}, we develop the duality point of view, extending the
notion of generalized additive decomposition, introduced in
\cite{ia-book-1999}, to any dimension. 
Section \ref{sec:5} is devoted to the algebraic characterization of the
extension property of linear forms, in terms of rank condition on
multivariate Hankel operators, or on commutation properties.
Finally in section \ref{sec:6}, we describe the algorithm and give examples.

\subsection{The binary case}

The present contribution is a generalization of Sylvester's algorithm devised
to decompose homogeneous polynomials in two variables into a sum of powers of
linear forms \cite{sylv-cr-1886} \cite{cm-binary-arxiv-2001}. 
It is hence convenient to first recall the latter algorithm.

\begin{theorem}[Sylvester, 1886]
  A binary quantic $p(x_1,x_2) = \sum_{i=0}^d {d \choose i} \, c_i \, 
  x_1^i \, x_2^{d-i}$ can be written as a sum of $d^{\mathrm{th}}$
  powers of $r$ distinct linear forms in $\CC$ as:
  \begin{equation}\label{cand:eq}
    p(x_1,x_2) = \sum_{j=1}^r \lambda_j \: (\alpha_j \, x_1 + \beta_j \, x_2)^d,
  \end{equation}
  if and only if {\bf(i)}~there exists a vector $\bvec{q}$ of dimension $r+1$, with
  components $q_\ell$, such that
  \begin{equation}\label{eq:sylves}
    \left[\begin{array}{cccc}
        c_0 & c_1 & \cdots & c_r\\
        \vdots   &  &    &  \vdots\\
        c_{d-r} & \cdots & c_{d-1} & c_d
      \end{array}\right] \: \bvec{q} = {\bf 0}.
  \end{equation}
  and {\bf(ii)}~the polynomial $q(x_1,x_2) = \sum_{\ell=0}^r q_\ell\,
  x_1^\ell\, x_2^{r-\ell}$ admits $r$ distinct roots, i.e. it can be written as
  $q(x_1,x_2)= \prod_{j=1}^{r} (\beta_{j}^*\,x_1 - \alpha_{j}^*\, x_2)$.
\end{theorem}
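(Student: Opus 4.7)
The strategy is to treat both implications through the same apolarity-style bookkeeping that relates the coefficient vector $(c_0,\ldots,c_d)$ of $p$ to the polynomial $q$ via condition~\eqref{eq:sylves}. The pivotal observation, used in both directions, is that plugging $(x_1,x_2)=(\alpha,\beta)$ into the linear form $(\beta x_1-\alpha x_2)$ yields $0$; hence $q(x_1,x_2)=\prod_{j}(\beta_j^{*}x_1-\alpha_j^{*}x_2)$ has as its $r$ distinct roots precisely the projective points $[\alpha_j^{*}:\beta_j^{*}]\in\mathbb{P}^{1}$.

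For the forward direction I would construct $q$ directly from the decomposition: set $q(x_1,x_2)=\prod_{j=1}^{r}(\beta_j x_1-\alpha_j x_2)$, which has degree exactly $r$ since the linear forms $\alpha_j x_1+\beta_j x_2$ are pairwise non-proportional; this immediately gives~(ii). To check~\eqref{eq:sylves}, I would expand $p$ by the binomial theorem to read off $c_i=\sum_j\lambda_j\alpha_j^{\,i}\beta_j^{\,d-i}$, and then factor the sum $\sum_{\ell}c_{i+\ell}q_\ell$ through the evaluations $q(\alpha_j,\beta_j)=0$; this is a one-line calculation.

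The content is in the converse. Here I would interpret~\eqref{eq:sylves} as saying that $\mathbf{c}=(c_0,\ldots,c_d)$ lies in the subspace
\[
V=\bigl\{\mathbf{v}\in\CC^{d+1}\ :\ \textstyle\sum_{\ell=0}^{r}q_\ell v_{i+\ell}=0 \text{ for } i=0,\ldots,d-r\bigr\}.
\]
The plan is to exhibit $r$ explicit, linearly independent vectors of $V$, namely the moment vectors $\mathbf{c}_j=\bigl((\alpha_j^{*})^{i}(\beta_j^{*})^{d-i}\bigr)_{i=0}^{d}$. Their membership in $V$ follows from $q(\alpha_j^{*},\beta_j^{*})=0$ by the same one-line computation as in the forward direction, and their linear independence is the usual Vandermonde fact about distinct points of $\mathbb{P}^{1}$ (after passing to an affine chart, or by expanding a determinant in the ratios $\alpha_j^{*}/\beta_j^{*}$). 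Once I show $\dim V=r$, the $\mathbf{c}_j$ form a basis of $V$, so the coefficients $\lambda_j$ in the expansion $\mathbf{c}=\sum_j\lambda_j\mathbf{c}_j$ yield the desired decomposition with $(\alpha_j,\beta_j)=(\alpha_j^{*},\beta_j^{*})$.

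The step I expect to need the most care is the dimension count $\dim V=r$. Since $V$ is cut out by $d-r+1$ linear equations in $\CC^{d+1}$, it suffices to show that the corresponding matrix has full row rank. I plan to identify that matrix with the ``multiplication by~$q$'' map on coefficient vectors of univariate polynomials: a dependence among its rows would give a polynomial identity $a(t)\,q(t)=0$ in $\CC[t]$, which forces $a\equiv 0$ since $q\not\equiv 0$. This avoids any separate treatment of roots at infinity and keeps the whole argument uniform in the position of the $[\alpha_j^{*}:\beta_j^{*}]$.
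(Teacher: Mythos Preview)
Your argument is correct and is precisely the classical apolarity proof of Sylvester's theorem. The forward direction is the standard one-line computation, and for the converse your identification of $V$ as the kernel of the Toeplitz map ``multiplication by $q$'' (hence injective on polynomials, hence the map has full row rank, hence $\dim V=r$) is the clean way to do the dimension count; the Vandermonde independence of the moment vectors $\mathbf c_j$ then forces $\mathbf c$ into their span. The only place to be slightly careful is the linear independence when some $\beta_j^{*}=0$: the naive ``take the first $r$ rows and divide by $(\beta_j^{*})^{d}$'' fails there, but your remark about passing to an affine chart (equivalently, looking at a suitable $r\times r$ minor, or noting that $d$-th powers of $r\le d+1$ pairwise non-proportional linear forms are linearly independent) handles it.

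As for the comparison: the paper does \emph{not} give its own proof of this theorem. It states the result, remarks that ``the proof of this theorem is constructive'' with citations to Sylvester, Comon--Mourrain, and Comon--Golub--Lim--Mourrain, and then passes directly to Algorithm~1.1. Your write-up is exactly the argument those references contain (and is in fact the apolar-duality picture the paper later develops in Sections~\ref{sec:duality}--\ref{sec:5} for the multivariate case), so there is no discrepancy in approach---you have simply supplied what the paper outsources.
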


The proof of this theorem is constructive 
\cite{sylv-cr-1886} \cite{cm-sp-1996} \cite{cglm-simax-2008} and yields Algorithm \ref{algo:sylvester}.
Given a binary polynomial $p(x_1,x_2)$
of degree $d$ with coefficients $a_i={d \choose i}\,c_i$, $0\le i\le d$,
define the Hankel matrix $H[r]$ of dimensions $d-r+1\times r+1$ with entries
$H[r]_{ij}=c_{i+j-2}$. 
\begin{algorithm}\caption{\textsc{Binary form decomposition}}\label{algo:sylvester}
\KwIn{A binary form $p(x_{0},x_{1})$ of degree $d$.}
\KwOut{A decomposition of $p$ as  $p(x_1,x_2) = \sum_{j=1}^r \lambda_j \:\bvec{k}_j(\bvec{x})^d$ with $r$ minimal.}
\begin{enumerate}
  \addtolength{\itemsep}{-0.8\parsep minus 0.8\parsep}
\item
  Initialize $r=0$ 
\item\label{step2-item}
  Increment $r\leftarrow r+1$
\item
  If the column rank of $H[r]$ is full, then go to step \ref{step2-item}
\item
  Else compute a basis $\{\bvec{k}_{1},\ldots, \bvec{k}_{l}\}$ of the right
kernel of $H[r]$.
\item\label{step5-item}
  Specialization:
  \begin{itemize}
    \addtolength{\itemsep}{-0.8\parsep minus 0.9\parsep}
  \item Take a generic vector $\bvec{q}$ in the kernel, e.g.
    $\bvec{q}=\sum_i \mu_i \bvec{k}_{i}$
  \item
    Compute the roots of the associated polynomial   
    $q(x_1,x_2)=\sum_{\ell=0}^r q_\ell \, x_1^\ell\,x_2^{d-\ell}$. 
    Denote them ($\beta_j,-\alpha_j)$, where $|\alpha_j|^2+|\beta_j|^2=1$.
  \item
    If the roots are not distinct in $\Proj^2$, try another specialization. If
    distinct roots cannot be obtained, go to step \ref{step2-item}. 
  \item
    Else if $q(x_1,x_2)$ admits $r$ distinct roots then compute coefficients
    $\lambda_j$, $1\le j\le r$, by solving the linear system below, where $a_i$ denotes ${d \choose i}\,c_i$
    $$
    \left[\begin{array}{ccc}
        \alpha_1^d & \dots & \alpha_r^d\\
        \alpha_1^{d-1}\beta_1 & \dots & \alpha_r^{d-1}\beta_r\\
        \alpha_1^{d-2}\beta_1^2 & \dots & \alpha_r^{d-1}\beta_r^2\\
        : & : & :\\
        \beta_1^d & \dots & \beta_r^d
      \end{array}\right] \, \bvec{\lambda} = 
    \left[\begin{array}{c}
        a_0\\ a_1\\ a_2\\ :\\ a_d
      \end{array}\right]
    $$
  \end{itemize}
\item
  The decomposition is 
  $p(x_1,x_2) = \sum_{j=1}^r \lambda_j \:\bvec{k}_j(\bvec{x})^d$, 
  where $\bvec{k}_j(\bvec{x})=(\alpha_j\,x_1 + \beta_j\,x_2)$.
\end{enumerate}
\end{algorithm}
Note that step \ref{step5-item} is a specialization only if the dimension of
the right kernel is strictly larger than 1.

\subsection{Notation and preliminaries} 

Let $\kk$ be an algebraically closed field (e.g. $\kk=\CC$ the field of
complex numbers).
For a vector space $E$, its associated projective space is denoted
$\Proj(E)$. For $\bvec{v}\in E-\{0\}$ its class in $\Proj(E)$ is denoted
$\overline{\bvec{v}}$.
Let $\Proj^n$ be the projective space of the field $\kk^n$.

If $\bvec{a} = (a_1, \dots, a_n)$ is a vector in $\NN^{n}$, 
then $|\bvec{a}|$ is the sum of its
elements, i.e. $|\bvec{a}| = \sum_{i=a}^{n}{a_i}$.
We also use the greek letters $\alpha$  and $\beta$ for vectors in $\NN^n$.
We denote by $\bvec{x}^{\bvec{\alpha}}$ the monomial $x_1^{\alpha_1}\, \cdots \, x_n^{\alpha_n}$.
For a set $B = \set{b_1, \dots, b_m}$, we denote by $\<B\>$, respectively $(B)$, 
the corresponding vector space, resp. ideal.

Let $R$ be the ring of polynomials $\kk[x_1, \dots, x_n]$,
while $R_d$ will denote the ring of polynomials of (total) degree at most $d$.
The set 
$\set{ \bvec{x}^{\bvec{\alpha}}}_{|\bvec{\alpha}| \leq d} = \set{ x_1^{\alpha_1} \cdots x_n^{\alpha_n}}_{\alpha_1 + \cdots + \alpha_n \leq d}$
represents the elements of the monomial basis of the vector space $R_d$
and contains ${n+d \choose d}$ elements.
Hereafter, the superscript $h$ denotes the homogenization of a polynomial. 
We denote by $S_{d}$ the vector space of homogeneous polynomials 
in $n+1$ variables  $x_0, x_1, \dots, x_n$.
This is also the symmetric $d$-th power $S^{d}(E)$ where
$E=\<x_{0},\ldots,x_{n}\>$. The dehomogenization of o polynomial $f\in
S_{d}$ with respect to the variable $x_{0}$ is denoted $f^{a}:=
f(1,x_{1},\ldots, x_{n})$.

Duality is an important ingredient of our approach. For a comprehensive
treatment of duality on multivariate polynomials, we refer the reader 
to \cite{mp-jcomplexity-2000}.  Hereafter, for a $\kk$-vector space $E$
its dual $\dual{E}=\Hom_{\kk}(E,\kk)$ is the set of $\kk$-linear forms
form $E$ to $\kk$.

A basis of the dual space $\dual{R}_d$,
is the set of linear forms that compute the
coefficients of a polynomial in the primal basis. It is 
denoted by $\set{ \bvec{d}^{ \bvec{\alpha}}}_{ \abs{ \bvec{\alpha}} \leq d}$.

We may identify $\dRR$ with the (vector) space of formal power
series, i.e. $\kk[[ \bvec{d}]] = \kk[[ d_1, \dots, d_n]]$. 
Any element $\Lambda \in \dual{R}$ can be decomposed as 
$\Lambda = \sum_{\bvec{a}} \Lambda (\bvec{x}^{\bvec{a}})\bvec{d}^{\bvec{a}}.$
Typical elements of $\dual{R}$ are the linear forms that correspond to the
evaluation at a point  $\zeta \in \kk^n$:
\begin{displaymath}
  \begin{array}{lcl}
    \mat{1}_{\zeta }  & : & R \rightarrow \kk \\
    &  &  p \mapsto p (\zeta) \\
  \end{array}
\end{displaymath}
The decomposition of $\mat{1}_{\zeta}$ in the basis $\set{ \bvec{d}^{\bvec{a}}}_{ \abs{ \bvec{a}} \leq d}$
is $\mat{1}_{\zeta} = \sum_{\bvec{a}} \zeta^{ \bvec{a}} \bvec{d}^{\bvec{a}}$.
Such an evaluation form can be composed with differentiation.
In fact, if $\theta (\partial_1, \ldots, \partial_n)$ is a differential polynomial, then
\begin{displaymath}
  \begin{array}{lcl}
    \mat{1}_{\zeta} \circ \text{$\theta (\partial_1, \ldots, \partial_n)$} & : &
    R \rightarrow \kk \\
    & & p \mapsto \theta ( \partial_1, \ldots, \partial_n) (p) (\zeta).
  \end{array}
\end{displaymath}
The dual space $\dual{R}$ has a natural structure of $R$-module \cite{EM08} which is
defined as follows: for all $p \in R$, and for all $\Lambda \in \dual{R}$
consider the linear operator
\begin{displaymath}
  \begin{array}{lcl}
    p \star \Lambda & : & R \rightarrow \kk \\
    && q \mapsto \Lambda (p q).
  \end{array}
\end{displaymath}
In particular, we have
$x_i \star \bvec{d}^{\bvec{a}} = 
  \left\{ 
    \begin{array}{lcl}
      d_1^{a_1} \cdots d_{i-1}^{a_{i-1}} d_i^{a_i - 1} d_{i + 1}^{a_{i+1}} \cdots d_n^{a_n} & & \text{if }  a_i > 0,\\
      0 && \text{otherwise} .
    \end{array}
  \right.
$

\section{Problem formulations}
\label{sec:2}
\label{sec:prob-formulation}
In this section, we present three different formulations of the same problem,
that we consider in this paper.
\subsection{Polynomial decomposition}
\label{sec:def-poly-decomp}
A symmetric tensor $\mathbb{[}a_{j_{0},\ldots, j_{n}}\mathbb{]}$ of order $d$
and dimension $n$ can be associated with a homogeneous polynomial $f( \x) \in
S_d$:
\begin{equation}
  \label{eq:homo-poly}
  f( \x) = 
  \sum_{j_0 + j_1 + \cdots + j_n = d}{a_{j_0, j_1, \ldots, j_n} x_0^{j_0} x_1^{j_1} \cdots x_n^{j_n}}.
\end{equation}
Our goal is to compute a decomposition of $f$ as a sum of $d^{\mathrm{th}}$ powers of linear
forms, i.e.
\begin{equation}
  \label{eq:poly-decomp}
  f(\x) = \sum_{i=1}^{r}{\lambda_{i}\, ( k_{i,0} x_0 + k_{i,1} x_1 + \cdots + k_{i, n} x_n)^d} = 
  \lambda_{1}\, \bvec{k}_1(\bvec{x})^d + 
  \lambda_{2}\,\bvec{k}_2(\bvec{x})^d + \cdots + 
  \lambda_{r}\,\bvec{k}_r(\bvec{x})^d,
\end{equation}
where $\lambda_{i}\neq 0$, $\bvec{k}_{i}\neq 0$,
and $r$ is the smallest possible.
This minimal $r$  is called the {\em rank} of $f$.

The {\em direct} approach to solve this decomposition problem, is the
following. Consider the  relation
\begin{displaymath}
  f(\x) = \sum_{i=1}^{r}{( k_{i,0} x_0 + k_{i,1} x_1 + \cdots + k_{i, n} x_n)^d},
\end{displaymath}
where $\bvec{k}_{i}\neq 0$.
We assume that  $r$, the rank, is known and the smallest possible.
We consider the  $r(n+1)$ coefficients $k_{i,j}$ of the linear forms as unknowns.
We expand (symbolically) the right hand side of the relation.
The two polynomials of the left and right hand sides are equal, thus by
equating the coefficients of the same monomials we get a polynomial system in
the coefficients $k_{i,j}$.
This is an over-constrained polynomial system of ${n+d \choose d}$ equations and
$r(n+1)$ unknowns. The polynomials of the system are
homogeneous of degree $d$ and the magnitude of their coefficients
is $\leq r {n+d \choose d}$. 
This approach describes the problem of decomposition in a non-optimal
way. It introduces $r!$ redundant solutions, since every permutation of the
linear forms is also a solution. 
Another drawback of this approach is that the polynomials involved are of high
degree, that is, $d$. The reader can compare this with the degree 2
polynomial system,
described in Section \ref{sec:trunc:hankel}, containing the polynomials that
we  have to solve in order to extend the matrix.

In the following sections, we are going to describe a new method, which 
is much more efficient to solve this decomposition problem.

\subsection{Veronese and Secant Varieties}
\label{sec:veronese}
Let us recall the well-known correspondence between the symmetric outer product decomposition
and secant varieties for symmetric tensors. The set of symmetric tensors or homogeneous
polynomials of the form 
$\bvec{k}( \bvec{x})^d = ( k_0 x_0 + k_1 x_1 + \cdots + k_n x_n)^d$
for $\bvec{k} = ( k_0 , k_1, \dots, k_n) \in \kk^n$
is a closed algebraic set. 
Scaling the vector $\bvec{k}$ by a non-zero scalar $\lambda$
yields a homogeneous polynomial scaled by $\lambda^d$.
Thus, we can also consider this construction as a
map $\bvec{k} \mapsto \bvec{k}( \bvec{x})^d$ from the projective space 
$\Proj^{n-1}$  to the projective space of symmetric tensors:
\begin{displaymath}
  \begin{array}{lclcl}
    \nu & : & \Proj( S_{1}) &\rightarrow & \Proj( S_d) \\
    &   & \bvec{k}(\bvec{x})   &\mapsto & \bvec{k}(\bvec{x})^d .
  \end{array}
\end{displaymath}
The image of $\nu$ is called the Veronese variety $\mathcal{V}_{n, d}$
\cite{zak-ams-1993,harris-book}. Following this point of 
view, a tensor is of rank 1 if it corresponds to a point on the Veronese
variety. A tensor is of rank $\leq r$ if it is a linear combination of $r$ tensors
of rank 1. In other words, it is in the linear space spanned by $r$ points of
the Veronese variety. The closure of the $r-$dimensional linear space spanned
by $r$ points of the Veronese variety $\mathcal{V}_{n,d}$ is called the
$(r-1)$-secant variety of $\mathcal{V}_{n,d}$ and denoted
$\Sc^{r-1}(\mathcal{V}_{n,d})$. 
We refer the reader to \cite{zak-ams-1993,harris-book} for examples and general
properties of these algebraic sets. In the non-symmetric case,  the
so-called Segre variety of the projective space of tensors is considered
 instead of the Veronese variety. It corresponds to the set of (possibly
non-symmetric) tensors of rank 1.

For any $f\in S_{d}-\{0\}$, the smallest $r$ such that $\overline{f}\in
\Sc^{r-1}(\mathcal{V}_{n,d})$ is called the {\em typical rank} or {\em border
rank} of $f$ 
\cite{cglm-simax-2008,TenbSR04:laa,BurgCS97}. 

\subsection{Decomposition using duality}
\label{sec:duality}


Let $f, g \in S_d$, where 
$f=\sum_{|\alpha|=d} f_{\alpha} x_{0}^{\alpha_{0}}\cdots x_{n}^{\alpha_{n}}$
and
$g=\sum_{|\alpha|=d} g_{\alpha} x_{0}^{\alpha_{0}}\cdots x_{n}^{\alpha_{n}}$.
We define the apolar inner product on $S_{d}$ as 
\begin{displaymath}
  \<f,g\> = \sum_{|\alpha|=d} f_{\alpha}\, g_{\alpha} {d \choose \alpha_{0},\ldots,\alpha_{n}}^{-1}.
\end{displaymath}
Using this non-degenerate inner product, we can associate an element of $S_{d}$
with an element $\dual{S}_{d}$, through the following map:
\begin{eqnarray*}
  \tau : S_{d} & \rightarrow \dual{S}_{d}\\
  f & \mapsto \Lambda_f ,
\end{eqnarray*}
where the linear form $\Lambda_f$ is defined as $\Lambda_f: g \mapsto \<f,g\>$.
A simple calculation shows that 
$\<f,\bvec{k}(\bvec{x})^{d}\> = f(\bvec{k})$ 
so that under this duality it holds that 
$\tau( \bvec{k}(\bvec{x})^{d})=\mat{1}_{\bvec{k}} \in \dual{S}_{d}$. 
Moreover, under $\tau$, the polynomial 
$f=\sum_{|\alpha|=d}\, c_{\alpha}\, {d \choose \alpha}\, \x^{\alpha} \in S_{d}$ 
is mapped to 
$\dual{f}= \sum_{|\alpha|=d}\, c_{\alpha}\, \db^{\alpha} \in \dual{S}_{d}$.


The problem of decomposition of $f$ can then be restated as follows:
\begin{quote}
   Given $\dual{f} \in \dual{S}_{d}$, find the minimal number
  of non-zero vectors
  $\bvec{k_{1}}, \ldots, \bvec{k_{r}}\in \kk^{n+1}$ and non-zero scalars $\lambda_{1},
  \ldots, \lambda_{r} \in \kk-\{0\}$ such that 
  $$ 
  \dual{f} = \sum_{i=1}^{r} \lambda_{i} \, \mat{1}_{\bvec{k}_{i}}.
  $$
\end{quote}
By scaling $\bvec{k_{i}}$ and multiplying $\lambda_{i}$ by the inverse of the
$d^{th}$ power of this scaling factor, we may assume that the first non-zero
coordinate of $\bvec{k}_{i}$  is $1$.

\begin{definition} 
  We say that $\dual{f}$ is an affine decomposition if 
  for every $\bvec{k}_{i}$ in the decomposition,   $\bvec{k}_{i,0}\neq 0$.
\end{definition}
By a generic change of coordinates, any decomposition of $\dual{f}$ can be
transformed into an affine decomposition.
To any $\dual{f} \in \dual{S}_{d}$, 
we can associate an element in $\dual{R}_{d}$,
defined by 
$\tilde{\Lambda}_{f}: p \in R_{d} \mapsto \dual{f}(p^{h})$,
where $p^{h}$ is the homogenization in degree $d$ of $p$. 
If $\dual{f}$ admits an affine decomposition with $\bvec{k}_{i,0}=1$
then we also have that $\tilde{\Lambda}_{f}$ coincides
with the linear form
$$ 
\tilde{\Lambda} = \sum_{i=1}^{r} \lambda_{i} \, \mat{1}_{\tilde{\bvec{k}}_{i}}
$$
up to degree $d$, 
where $\tilde{\bvec{k}}_{i}$ is the vector made of the last $n$ coordinates
of $\bvec{k}_{i}$.

\section{Hankel operators and quotient algebra}
\label{sec:4}\label{sec:moments}

In this section, we recall the algebraic tools we will need to describe and
analyze our algorithm.

\label{sec:hankel-op}
\label{sec:quadratic-form}

For any $\Lambda \in \dual{R}$, 
we define the bilinear form $Q_{\Lambda}$, such that 
\begin{displaymath}
  \begin{array}{lcl}
    Q_{\Lambda} & : & R \times R \rightarrow \kk  \\
    &  & (a, b) \mapsto \Lambda (a b).
  \end{array}
\end{displaymath}
The matrix of $Q_{\Lambda}$ in the monomial basis, 
of $R$ is 
$\mat{Q}_{\Lambda} = (\Lambda (\x^{\alpha + \beta}))_{\alpha, \beta}$,
where $\alpha, \beta \in \NN^n$.

For any $\Lambda \in \dual{R}$, we define the Hankel operator $H_{\Lambda}$
from $R$ to $\dual{R}$ as 
\begin{displaymath}
  \begin{array}{lcl}
    H_{\Lambda} & : & R \rightarrow \dRR  \\
    && p \mapsto p \star \Lambda.
  \end{array}
\end{displaymath}
The matrix of the linear operator $ H_{\Lambda}$ 
in the monomial basis, 
and in the dual basis, $\set{ \bvec{d}^{\alpha}}$,
is $\mat{H}_{\Lambda} = ( \Lambda( \x^{\alpha + \beta}))_{\alpha, \beta}$,
where $\alpha, \beta \in \NN^n$.
The following relates the Hankel operators with the bilinear forms.
For all $a, b \in R$, thanks to the $R$-module structure, it holds 
\begin{displaymath}
  Q_{\Lambda} (a, b) = \Lambda( a b) = a \cdot \Lambda (b) = b
  \cdot \Lambda (a) = H_{\Lambda} (a) (b) = H_{\Lambda} (b) (a).
\end{displaymath}
In what follows we will identify $H_{\Lambda}$ and $Q_{\Lambda}$.
\begin{definition} 
  Given $B=\{b_{1},\ldots,b_{r}\}, B' =\{b'_{1},\ldots, b'_{r'}\}\subset R$ we define
  $$ 
  H^{B,B'}_{\Lambda} : \<B\> \rightarrow \dual{ \<B'\>},
  $$
  as the restriction of $H_{\Lambda}$ to the vector space $\<B\>$ and
inclusion of $R^{*}$ in $\dual{ \<B'\>}$. Let $\mat{H}^{B,B'}_{\Lambda} = 
  (\Lambda(b_{i}\, b'_{j}))_{1 \le i\le r, 1\le j\le r'}$. If  $B'=B$, we
also use the notation $H^{B}_{\Lambda}$ and $\mat{H}^{B}_{\Lambda}$.
\end{definition}
If $B, B'$ are linearly independent, then
$\mat{H}^{B,B'}_{\Lambda}$ is the matrix of
$H^{B,B'}_{\Lambda}$ in this basis $\{b_{1},\ldots,b_{r}\}$ of $\<B\>$ and the
dual basis of $B'$ in $\dual{ \<B'\>}$.
The {\em catalecticant} matrices of \cite{ia-book-1999} correspond to the case where
$B$ and $B'$ are respectively the set of monomials of degree $\le k$ and $\le
d-k$ ($k=0,\ldots,d$).

From the definition of the Hankel operators, we can deduce that a polynomial $p \in R$
belongs to the kernel of $\mat{H}_{\Lambda}$ if and only if
$p \star \Lambda = 0$, which in turn holds if and only if 
for all $q \in R$, $\Lambda(p q) = 0$.

\begin{proposition}
  \label{prop:kernel-is-ideal}
  Let $I_{\Lambda}$ be the kernel of ${H}_{\Lambda}$.
  Then, $I_{\Lambda}$ is an ideal of $R$.
\end{proposition}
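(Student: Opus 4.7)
The plan is to verify the two defining properties of an ideal: that $I_{\Lambda}$ is a $\kk$-linear subspace of $R$, and that it is absorbing under multiplication by arbitrary elements of $R$. Linearity is immediate because $I_{\Lambda}$ is defined as the kernel of the $\kk$-linear map $H_{\Lambda}:R\to\dual{R}$, so it suffices to concentrate on the absorbing property.

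First I would unpack the characterization of membership in $I_{\Lambda}$ noted just before the statement: a polynomial $p\in R$ lies in $I_{\Lambda}$ if and only if $p\star\Lambda=0$ as an element of $\dual{R}$, equivalently $\Lambda(pq)=0$ for every $q\in R$. This is just the definition of $H_{\Lambda}$ combined with the fact that a linear form on $R$ vanishes exactly when it vanishes on every polynomial.

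Next, given $p\in I_{\Lambda}$ and an arbitrary $a\in R$, I would show $a\,p\in I_{\Lambda}$ by computing, for every test polynomial $q\in R$,
\[
\Lambda\bigl((a p)\,q\bigr)=\Lambda\bigl(p\,(a q)\bigr)=(p\star\Lambda)(a q)=0,
\]
where the first equality is associativity and commutativity of multiplication in $R$, the second is the definition of $p\star\Lambda$, and the last uses $p\in I_{\Lambda}$. Conceptually, this is nothing more than the identity $(a p)\star\Lambda = a\star(p\star\Lambda)$ coming from the $R$-module structure on $\dual{R}$ recalled in the notation section: multiplying $p$ by $a$ inside $H_{\Lambda}$ corresponds to acting by $a$ on the image, and $0$ is fixed by this action.

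There is really no obstacle here; the only thing one has to be careful about is to invoke the correct $R$-module structure on $\dual{R}$ (namely the one defined via $a\star\Lambda: q\mapsto\Lambda(aq)$) and to recognize that the kernel of any $R$-module map from $R$ to an $R$-module is automatically an ideal. I would close with this one-line remark, noting that $H_{\Lambda}$ is in fact a morphism of $R$-modules, which gives the result without any case analysis.
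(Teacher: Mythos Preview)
Your proof is correct and essentially identical to the paper's: both verify closure under addition (you via linearity of $H_{\Lambda}$, the paper by a direct one-line computation) and the absorbing property via $\Lambda((ap)q)=\Lambda(p(aq))=0$. Your concluding observation that $H_{\Lambda}$ is an $R$-module morphism is a nice conceptual gloss, but the underlying argument is the same.
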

\begin{proof}
  Let $p_1, p_2 \in I_{\Lambda}$. Then for all $q \in R$,
  $\Lambda( (p_1+p_2)q) = \Lambda(p_1 q) + \Lambda( p_2 q) = 0$.
  Thus, $p_1 + p_2 \in I_{\Lambda}$.
  If $p \in I_{\Lambda}$ and $p' \in R$,
  then for all $q \in R$, it holds $\Lambda (p p' q) = 0$. 
  Thus $p p' \in I_{\Lambda}$ 
  and $I_{\Lambda}$ is an ideal.
\end{proof}

Let $\AA = R / I_{\Lambda}$ be the quotient algebra of polynomials 
modulo the ideal $I_{\Lambda}$, which, as  Proposition~\ref{prop:kernel-is-ideal} 
states is the kernel of ${H_{\Lambda}}$. The rank of $H_{\Lambda}$  is the
dimension of $\AA$ as a $\kk$-vector space.

A quotient algebra $\Ac$ is Gorenstein if there exists a
non-degenerate bilinear form $Q$
on $\Ac$, such that for all polynomials $f, g, h \in \Ac$ it holds that
$Q( f, gh) = Q( fg, h)$ or equivalently if there exists $\Lambda \in
\dual{\Ac}$ such that $(f,g)\in \Ac \times \Ac \mapsto \Lambda(f\,g)$ is non-degenerate.
Equivalently, $\Ac$ is Gorenstein iff $\dual{\Ac}$ is a free $\Ac$-module
generated by one element $\Lambda \in \dual{\Ac}$:
$\dual{\Ac}=\Ac\star \Lambda$.  See e.g. \cite{EM08} for more details.
The set $\Ac\star\Lambda$ is also called the
inverse system generated by $\Lambda$ \cite{Mac16}.

\begin{proposition}
  The dual space $\dual{ \mathcal{A}}_{\Lambda}$ of $\mathcal{A}_{\Lambda}$,
  can be identified with the set $D = \Setbar{q \star \Lambda}{q \in R}$ and
$\AA$ is a Gorenstein algebra.
\end{proposition}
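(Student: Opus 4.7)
The plan is to build a natural $R$-linear map $\phi \colon R \to \dual{R}$, $q \mapsto q\star\Lambda$, whose image is exactly the set $D$, and then to show that $\phi$ factors through $\AA = R/I_\Lambda$ to give an isomorphism $\bar\phi \colon \AA \xrightarrow{\sim} \dAA$. Gorenstein-ness will then drop out because this isomorphism says precisely that $\dAA$ is a free $\AA$-module generated by the single element $\Lambda$.

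First I would verify that $\phi$ factors through $\AA$. By Proposition~\ref{prop:kernel-is-ideal}, $I_\Lambda$ is the kernel of $H_\Lambda$, which is literally the set of $q$ such that $q\star\Lambda=0$, so $\ker\phi = I_\Lambda$ and the induced map $\bar\phi \colon \AA \to \dual{R}$ is injective, with image $D$. Next I would show that every element $q\star\Lambda$ annihilates $I_\Lambda$: for $p\in I_\Lambda$,
\[
(q\star\Lambda)(p) \;=\; \Lambda(qp) \;=\; \Lambda(p\cdot q) \;=\; (p\star\Lambda)(q) \;=\; 0,
\]
using the $R$-module structure recalled in the preliminaries. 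Hence $D\subseteq \dAA$, and $\bar\phi$ actually lands in $\dAA$.

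The remaining step, which I expect to be the main obstacle, is the reverse inclusion $\dAA\subseteq D$, i.e. every linear form on $R$ that vanishes on $I_\Lambda$ is of the shape $q\star\Lambda$ for some $q\in R$. Since $\AA$ is finite-dimensional in the setting of interest (the Hankel operator has finite rank), this is easy: $\bar\phi\colon\AA\hookrightarrow\dAA$ is an injection between spaces of the same dimension, hence a bijection, so $D=\bar\phi(\AA)=\dAA$. This gives the identification claimed in the proposition. (Without finite-dimensionality the statement is really about the image $D$ being the full continuous dual in the natural topology on $\dual{R}\cong \kk[[\db]]$, but for the application in the paper the finite-rank case is all that matters.)

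Finally, the Gorenstein conclusion is immediate from the isomorphism just established. The identity $\bar\phi(p\cdot\bar q) = p\star(q\star\Lambda) = (pq)\star\Lambda$ shows $\bar\phi$ is $\AA$-linear, so $\dAA = \AA\star\Lambda$ is a free $\AA$-module of rank one with generator $\Lambda$, which is one of the equivalent definitions of Gorenstein recalled above. Equivalently, the bilinear form $Q_\Lambda(f,g)=\Lambda(fg)$ on $\AA$ is non-degenerate: if $f\in\AA$ satisfies $\Lambda(fg)=0$ for all $g$, then $f\star\Lambda=0$, i.e. $f\in I_\Lambda$, so $f=0$ in $\AA$.
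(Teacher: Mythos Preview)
Your argument is correct and the Gorenstein part matches the paper's almost verbatim. The one place you diverge is in the ``hard'' direction $\dAA\subseteq D$. The paper does not use dimension counting; instead it argues via annihilators: it computes
\[
D^{\bot}=\{p\in R:\Lambda(pq)=0\text{ for all }q\}=I_{\Lambda},
\]
and then invokes biduality to conclude $\dAA=I_{\Lambda}^{\bot}=D^{\bot\bot}=D$. This works without any finite-rank hypothesis on $H_{\Lambda}$ (granted the biduality $D^{\bot\bot}=D$, which holds because $D$ is an $R$-submodule of $\dual{R}\cong\kk[[\db]]$), so the proposition as stated covers the general case. Your approach is more elementary---just linear algebra---but it only settles the finite-dimensional situation, as you yourself note. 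Since every subsequent use in the paper is under a finite-rank assumption, your version suffices for the applications; still, it is worth knowing that the annihilator argument gives the result in full generality with no extra effort.
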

\begin{proof}
  Let $D =\{q \star \Lambda ; q \in R\}$ be the inverse system generated by
  $\Lambda$. By definition,
  \begin{displaymath}
    D^{\bot} = \Setbar{ p \in R}{ \forall q \in R, q \star \Lambda (p) =
      \Lambda (p q) = 0}. 
  \end{displaymath}
  Thus $D^{\bot} = I_{\Lambda}$, which is the ideal of the kernel of
  ${H}_{\Lambda}$ (Proposition~\ref{prop:kernel-is-ideal}).
  Since $\dual{ \mathcal{A}_{\Lambda}} = {I}_{\Lambda}^{\bot}$ 
  is the set of linear forms in $\dual{ R}$ which vanish on $I_{\Lambda}$, 
  we deduce that  $\dual{\AA}  =  I_{\Lambda}^{\bot} =
D^{\bot \bot} = D$.

As $p\star\Lambda=0$ implies $p\in I_{\Lambda}$ or $p\equiv 0$ in $\AA$, this
shows that  $\dual{\AA}$ is free rank $1$ $\AA$-module (generated by
$\Lambda$). Thus $\AA$ is Gorenstein.
\end{proof}

\begin{definition}
For any $B\subset R$, let $B^{+} = B\cup x_{1}B \cup \cdots \cup x_{n}B$ and
$\partial B= B^{+}-B$.
\end{definition}
 
\begin{proposition}\label{prop:basis:ideal}
  Assume that $\rank( {H}_{\Lambda}) = r < \infty$ 
  and let $B=\{b_1, \dots,  b_r\} \subset R$ such that 
  $\mat{H}_{\Lambda}^B$ is invertible. 
  Then $b_1, \dots, b_r$ is a basis of $\AA$. If $1 \in \<B\>$ 
the ideal $I_{\Lambda}$
is generated by $\ker H^{B^{+}}_{\Lambda}$.
\end{proposition}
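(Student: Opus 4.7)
The plan is to establish the basis property of $B$ first, and then reduce the claim about $I_\Lambda$ to a dimension count via a similar image-comparison argument. For the basis property, linear independence of $B$ modulo $I_\Lambda$ is direct: if $\sum_i \lambda_i b_i \in I_\Lambda = \ker H_\Lambda$, then the associated linear form on $R$ vanishes, so in particular its restriction to $\<B\>$ vanishes; this restriction is $\mat{H}^B_\Lambda \bvec{\lambda}$, so the invertibility of $\mat{H}^B_\Lambda$ forces $\bvec{\lambda} = 0$. For spanning, I would observe that $H_\Lambda|_{\<B\>}$ is injective (because $\mat{H}^B_\Lambda$ already is), so its image in $\dual{R}$ has dimension $r = \rank H_\Lambda$; since this image sits inside $\operatorname{Im}(H_\Lambda)$, which itself has dimension $r$, the two coincide. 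Consequently, for any $p \in R$ one finds $q \in \<B\>$ with $H_\Lambda(p - q) = 0$, i.e.\ $p \equiv q \pmod{I_\Lambda}$.

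For the second assertion, set $J := (\ker H^{B^+}_\Lambda)$. The inclusion $J \subseteq I_\Lambda$ is immediate, since $\ker H^{B^+}_\Lambda \subseteq \ker H_\Lambda = I_\Lambda$. I aim to prove $\<B\> + J = R$; combined with the surjection $R/J \twoheadrightarrow R/I_\Lambda = \AA$ and $\dim \AA = r$, this forces $\dim R/J = r$ and hence $J = I_\Lambda$. The auxiliary fact that unlocks this is that every $p \in \<B^+\>$ is congruent modulo $\ker H^{B^+}_\Lambda$ to some element of $\<B\>$. Indeed, $H_\Lambda^{B,B^+} : \<B\> \to \dual{\<B^+\>}$ is injective (its further restriction to $\dual{\<B\>}$ is the invertible $\mat{H}^B_\Lambda$), so its image has dimension $r$; on the other hand $\rank H^{B^+}_\Lambda \leq \rank H_\Lambda = r$, so the images of $H_\Lambda^{B,B^+}$ and of $H^{B^+}_\Lambda$ in $\dual{\<B^+\>}$ coincide.

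Given this, I would prove $\x^\alpha \equiv q_\alpha \pmod{J}$ with $q_\alpha \in \<B\>$ by induction on $|\alpha|$. The base case $\alpha = 0$ uses the hypothesis $1 \in \<B\>$ directly. The inductive step writes $\x^\alpha = x_i \x^\beta \equiv x_i q_\beta \pmod{J}$ with $x_i q_\beta \in x_i \<B\> \subseteq \<B^+\>$, after which the auxiliary fact yields the desired $q_\alpha \in \<B\>$. The main obstacle is the rank comparison $\rank H^{B^+}_\Lambda = r$; once this is in place the remaining steps are routine linear algebra and induction. The hypothesis $1 \in \<B\>$ plays a double role: it initializes the induction, and it ensures that $B^+$ already contains each variable $x_i = x_i \cdot 1$, which is what keeps the induction step within $\<B^+\>$ so that the auxiliary fact applies.
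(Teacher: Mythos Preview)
Your argument for the first assertion matches the paper's exactly: linear independence of $B$ modulo $I_\Lambda$ via invertibility of $\mat{H}^B_\Lambda$, then spanning by noting that $H_\Lambda|_{\<B\>}$ is injective so its image already has dimension $r = \rank H_\Lambda$, whence every $p\in R$ differs from some $q\in\<B\>$ by an element of $I_\Lambda$.

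For the second assertion your approach is in the same spirit as the paper's but considerably more explicit. The paper simply asserts that $I_\Lambda$ is generated by the border relations $x_j b_k - \sum_i \mu_i^{j,k} b_i$ (the reduction of each $x_j b_k$ into $\<B\>$) and observes that each such relation lies in $\ker H^{B^+}_\Lambda$. Your ``auxiliary fact'' produces exactly these relations, and your degree induction is the justification---left tacit in the paper---that they suffice to reduce every monomial to $\<B\>$; so your write-up is effectively a fleshed-out version of the paper's one-line sketch.

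There is, however, one step you label ``immediate'' that is not: the inclusion $\ker H^{B^+}_\Lambda \subseteq \ker H_\Lambda$. By definition, $p\in\ker H^{B^+}_\Lambda$ only gives $\Lambda(pq)=0$ for $q\in\<B^+\>$, not for all $q\in R$; the trivial inclusion runs the other way. To close this you must invoke the first part: given any $q\in R$, write $q=q_B+q_I$ with $q_B\in\<B\>$ and $q_I\in I_\Lambda$ (possible since $R=\<B\>\oplus I_\Lambda$); then $\Lambda(p\,q_I)=0$ because $p\,q_I\in I_\Lambda$, and $\Lambda(p\,q_B)=0$ because $q_B\in\<B\>\subseteq\<B^+\>$. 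With this one-line fix, your dimension comparison $\dim R/J\le r=\dim R/I_\Lambda$ is legitimate and the conclusion $J=I_\Lambda$ follows.
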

\begin{proof}
  Let us first prove that $\Set{b_1, \ldots, b_r} \cap I_{\Lambda} =\{0\}$.
  Let $p \in \langle b_1, \ldots, b_r \rangle \cap I_{\Lambda}.$ 
  Then $p =  \sum_i{ p_i \,b_i}$ with $p_{i} \in \kk$ and $\Lambda (p \, b_j) = 0$.
  The second equation implies that $\mat{H}_{\Lambda}^B \cdot \bvec{p} = \bvec{0}$, 
  where $\bvec{p}=[p_1, \dots, p_r]^{t}\in \kk^{r}$. 
  Since $\mat{H}_{\Lambda}^B$ is invertible, this
  implies that $\bvec{p} = \bvec{0}$ and $p = 0$.

  As a consequence, we deduce that 
  $b_1 \star \Lambda, \dots, b_r \star \Lambda$ 
  are linearly independent elements of $\dual{R}$. 
  This is so, because otherwise there exists
  $\bvec{m} = [\mu_1, \ldots, \mu_r]^{\top} \neq \bvec{0}$, 
  such that 
  $\mu_1 (b_1 \star \Lambda) + \dots  + \mu_r (b_r \star \Lambda) = 
  (\mu_1 b_1 + \cdots + \mu_r b_r) \star \Lambda = 0$. 
  As $\Set{b_1, \ldots, b_r} \cap \mathsf{Kernel}( \mat{H}_{\Lambda}) = \set{ 0}$, this
  yields a contradiction.

  Consequently, $\set{b_1 \star \Lambda, \ldots, b_r \star \Lambda}$ span the image
  of ${H}_{\Lambda}$. For any $p \in R$, it holds that   
  $p \star \Lambda = \sum_{i = 1}^r{\mu_i  (b_i \star \Lambda)}$ 
  for some $\mu_1, \ldots, \mu_r \in \kk$. We
  deduce that $p - \sum_{i = 1}^r \mu_i b_i \in I_{\Lambda}$. This yields the
  decomposition 
  $R = B \oplus I_{\Lambda}$,
  and shows that
  $b_1, \ldots, b_r$ is a basis of $\AA$.

If $1\in \<B\>$, the ideal $I_{\Lambda}$ is generated by the relations $x_{j}
b_{k} - \sum_{i = 1}^r \mu_i^{j,k} b_i \in I_{\Lambda}$. These are precisely
in the kernel of $H_{\Lambda}^{B^{+}}$.
\end{proof}

\begin{proposition}
  \label{prop:form-decomp}
  If $\rank( {H}_{\Lambda}) = r < \infty$, 
  then $\AA$ is of dimension $r$ over $\kk$ and there exist
  $\zeta_1, \ldots, \zeta_d \in \kk^n$
  where $d \leq r$), and $p_i \in \kk[ \partial_1, \dots, \partial_n]$, such
  that
  \begin{equation}
    \Lambda = \sum_{i = 1}^d \mat{1}_{\zeta_i} \circ p_i ( \bvec{\partial})
    \label{eq:lambda}
  \end{equation}
  Moreover the multiplicity of $\zeta_{i}$ is the dimension of the vector space
  spanned the inverse system generated by $\mat{1}_{\zeta_i} \circ p_i ( \bvec{\partial})$.
\end{proposition}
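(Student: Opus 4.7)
The plan is to derive the dimension statement from Proposition~\ref{prop:basis:ideal}, then decompose $\AA$ into its local Artinian factors at the points of $V(I_{\Lambda})$, and finally invoke Macaulay's inverse system to write each local component of $\Lambda$ in the form $\mat{1}_{\zeta_{i}} \circ p_{i}(\bvec{\partial})$. The equality $\dim_{\kk} \AA = r$ is immediate from Proposition~\ref{prop:basis:ideal}, which exhibits a basis of cardinality $r$. Since $\AA$ is a finite-dimensional $\kk$-algebra and $\kk$ is algebraically closed, $I_{\Lambda}$ is zero-dimensional, with vanishing locus $V(I_{\Lambda}) = \{\zeta_{1}, \ldots, \zeta_{d}\}$ and $d \leq r$. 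The structure theorem for Artinian algebras then gives
$$
\AA \;=\; \bigoplus_{i=1}^{d} \AA_{i},
$$
where $\AA_{i}$ is the local factor at $\zeta_{i}$, namely the quotient of $R$ by the $\mathfrak{m}_{\zeta_{i}}$-primary component of $I_{\Lambda}$. Dualizing yields $\dual{\AA} = \bigoplus_{i} \dual{\AA}_{i}$, and I would take $\Lambda_{i}$ to be the component of $\Lambda$ in $\dual{\AA}_{i}$; by construction each $\Lambda_{i}$ annihilates a sufficiently high power of $\mathfrak{m}_{\zeta_{i}}$.

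The central step is then to invoke Macaulay's inverse system theorem: after translating the origin to $\zeta_{i}$, the linear forms on $R$ which vanish on some power of the maximal ideal at $\zeta_{i}$ are precisely those of the form $\mat{1}_{\zeta_{i}} \circ p(\partial_{1}, \ldots, \partial_{n})$ for a differential polynomial $p \in \kk[\partial_{1}, \ldots, \partial_{n}]$. Applied to $\Lambda_{i}$ this yields $\Lambda_{i} = \mat{1}_{\zeta_{i}} \circ p_{i}(\bvec{\partial})$ for suitable $p_{i}$, establishing \eqref{eq:lambda}. This identification of locally supported linear forms with evaluations composed with differentials is the main technical hurdle; once available, the rest is bookkeeping on top of the Artinian decomposition.

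For the multiplicity assertion, I would use that the multiplicity of $\zeta_{i}$ in $V(I_{\Lambda})$ equals $\dim_{\kk} \AA_{i}$. Since $\AA$ is Gorenstein (previous proposition) and this property is local, each factor $\AA_{i}$ is Gorenstein with generator $\Lambda_{i}$, so the map $q \mapsto q \star \Lambda_{i}$ descends to an isomorphism $\AA_{i} \to \dual{\AA}_{i}$. Consequently the dimension of the inverse system $R \star \Lambda_{i}$ generated by $\Lambda_{i}$ equals $\dim_{\kk} \AA_{i}$, which is precisely the multiplicity of $\zeta_{i}$, as asserted.
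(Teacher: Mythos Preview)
Your proof is correct and follows essentially the same route as the paper: the paper simply cites the structure theorem \cite[Th.~7.34]{EM08} as a black box, and what you have written (Artinian decomposition of $\AA$ into local factors, then Macaulay's inverse system at each $\zeta_{i}$, then the Gorenstein property to identify the multiplicity with the dimension of the local inverse system) is precisely an unpacking of that theorem. One minor remark: you invoke Proposition~\ref{prop:basis:ideal} to get $\dim_{\kk}\AA=r$, which requires first producing a $B$ with $\mat{H}_{\Lambda}^{B}$ invertible; the paper instead reads this off directly from $\AA=R/\ker H_{\Lambda}$ and $\rank H_{\Lambda}=r$, which is a shade cleaner.
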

\begin{proof}
  Since $\rank( \mat{H}_{\Lambda}) = r$, 
  the dimension of the vector space $\mathcal{A}_{\Lambda}$ is also $r$.
  Thus the number of zeros of the ideal $I_{\Lambda}$,
  say $\{\zeta_1, \ldots, \zeta_d \}$ is at most $r$, viz. $d \leq r$. 
  We can apply the structure Theorem \cite[Th.~7.34, p. 185]{EM08}
  in order to get the decomposition.
\end{proof}
In characteristic $0$, the inverse system of  $\mat{1}_{\zeta_i} \circ p_i (
\bvec{\partial})$ by $p_{i}$ is isomorphic to the vector space generated by
$p_{i}$ and its derivatives of any order with respect to the variables $\partial_{i}$.
In general characteristic, we replace the derivatives by the product by the
"inverse" of the variables \cite{mp-jcomplexity-2000}, \cite{EM08}.

\begin{definition} For $f\in S^{d}$, we call generalized decomposition of $\dual{f}$
a decomposition such that $\dual{f} = \sum_{i = 1}^d \mat{1}_{\zeta_i} \circ
p_i ( \bvec{\partial})$ where the sum for $i=1,\ldots,d$ of the dimensions of the vector spaces 
  spanned by the inverse system generated by $\mat{1}_{\zeta_i} \circ
p_i ( \bvec{\partial})$ is minimal. This minimal sum of dimensions is called the length of $f$.
\end{definition}
This definition extends the definition introduced in \cite{ia-book-1999} for
binary forms. The length of $\dual{f}$ is the rank of the corresponding
Hankel operator $H_{\Lambda}$.

\begin{theorem} \label{prop:decomp-rank}
Let $\Lambda \in \dual{R}$. 
  $\Lambda = \sum_{i = 1}^r \lambda_{i}\, \mat{1}_{\zeta_i}$
  with $\lambda_{i}\neq 0$ and $\zeta_{i}$ distinct points of $\kk^{n}$, iff
$\rank H_{\Lambda}=r$ and $I_{\Lambda}$ is a radical ideal.
\end{theorem}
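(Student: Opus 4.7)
The plan is to prove each direction separately, using Proposition~\ref{prop:form-decomp} and Proposition~\ref{prop:kernel-is-ideal} together with standard facts on radical ideals of finitely many points.

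\textbf{Forward direction.} Assume $\Lambda = \sum_{i=1}^{r} \lambda_{i}\, \mat{1}_{\zeta_{i}}$ with $\lambda_{i}\neq 0$ and the $\zeta_{i}$ pairwise distinct. I would first identify the kernel $I_{\Lambda}$: a polynomial $p$ lies in $I_{\Lambda}$ iff $\Lambda(p q)=\sum_{i} \lambda_{i}\, p(\zeta_{i})\,q(\zeta_{i})=0$ for every $q\in R$. Since the $\zeta_{i}$ are distinct, there exist interpolating polynomials $q_{j}$ with $q_{j}(\zeta_{i})=\delta_{ij}$, so the condition forces $\lambda_{i}\,p(\zeta_{i})=0$ for every $i$, and hence $p(\zeta_{i})=0$ since $\lambda_{i}\neq 0$. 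Thus $I_{\Lambda}$ is precisely the vanishing ideal of the finite set $\{\zeta_{1},\ldots,\zeta_{r}\}$, which is radical, and the quotient algebra $\AA=R/I_{\Lambda}$ has dimension equal to the number of points, i.e.\ $r$. By Proposition~\ref{prop:basis:ideal}, $\rank H_{\Lambda}=\dim_{\kk}\AA = r$.

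\textbf{Backward direction.} Assume $\rank H_{\Lambda}=r$ and $I_{\Lambda}$ is radical. By Proposition~\ref{prop:form-decomp} there exist distinct points $\zeta_{1},\ldots,\zeta_{d}\in\kk^{n}$ with $d\le r$ and differential polynomials $p_{i}(\boldsymbol{\partial})$ such that
\[
\Lambda \;=\; \sum_{i=1}^{d} \mat{1}_{\zeta_{i}}\circ p_{i}(\boldsymbol{\partial}),
\]
where the zero set $V(I_{\Lambda})=\{\zeta_{1},\ldots,\zeta_{d}\}$ and the multiplicity of $\zeta_{i}$ equals the dimension of the inverse system generated by $\mat{1}_{\zeta_{i}}\circ p_{i}(\boldsymbol{\partial})$. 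Because $I_{\Lambda}$ is radical, its quotient algebra $\AA$ is isomorphic to $\kk^{d}$, so $\dim_{\kk}\AA=d$; combined with $\rank H_{\Lambda}=\dim_{\kk}\AA=r$, this gives $d=r$. Radicality also forces every local multiplicity to be $1$, so each inverse system generated by $\mat{1}_{\zeta_{i}}\circ p_{i}(\boldsymbol{\partial})$ is one-dimensional. Consequently $p_{i}(\boldsymbol{\partial})$ contains no differentiation of positive order, and $\mat{1}_{\zeta_{i}}\circ p_{i}(\boldsymbol{\partial}) = \lambda_{i}\, \mat{1}_{\zeta_{i}}$ for some $\lambda_{i}\in\kk$. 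Finally, no $\lambda_{i}$ may vanish, for otherwise $\Lambda$ would be supported on fewer than $r$ points and Proposition~\ref{prop:basis:ideal} would give $\rank H_{\Lambda}<r$.

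\textbf{Expected obstacle.} The forward direction is essentially Lagrange interpolation and is routine. The real content is on the backward side: I need the precise interpretation of Proposition~\ref{prop:form-decomp} asserting that the multiplicity at $\zeta_{i}$ coincides with the dimension of the inverse system generated by $\mat{1}_{\zeta_{i}}\circ p_{i}(\boldsymbol{\partial})$, so that radicality cleanly forces the differential parts $p_{i}$ to reduce to scalars. This is the step where I would be most careful, and would either invoke the structure theorem cited from \cite{EM08} explicitly or, in positive characteristic, replace derivatives with the dual product by ``inverses'' of variables as the paper indicates, to ensure the argument is characteristic-free.
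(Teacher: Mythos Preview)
Your proposal is correct and follows essentially the same route as the paper: the forward direction identifies $I_{\Lambda}$ with the vanishing ideal of $\{\zeta_{1},\ldots,\zeta_{r}\}$ via interpolation polynomials, and the backward direction invokes Proposition~\ref{prop:form-decomp} and uses radicality to force all multiplicities equal to $1$, reducing each $p_{i}(\boldsymbol{\partial})$ to a scalar. Your write-up is in fact more explicit than the paper's (you spell out $d=r$ and verify $\lambda_{i}\neq 0$), and the only cosmetic point is that the equality $\rank H_{\Lambda}=\dim_{\kk}\AA$ is stated directly after Proposition~\ref{prop:kernel-is-ideal} rather than in Proposition~\ref{prop:basis:ideal}.
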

\begin{proof}
If $\Lambda = \sum_{i = 1}^r \lambda_{i}\, \mat{1}_{\zeta_i}$, with
$\lambda_{i}\neq 0$ and $\zeta_{i}$ distinct points of $\kk^{n}$.
Let $\{e_{1},\ldots,e_{r}\}$ be a family of interpolation polynomials at these
points: $e_{i}(\zeta_{j})=1$ if $i=j$ and $0$ otherwise.
Let $I_{\zeta}$ be the ideal of polynomials which vanish at
$\zeta_{1},\ldots,\zeta_{r}$. It is a radical ideal. 
We have clearly $I_{\zeta}\subset I_{\Lambda}$. For any $p \in I_{\Lambda}$,
and $i=1,\ldots,r$, 
we have $p\star\Lambda(e_{i})= \Lambda(p\, e_{i}) = p(\zeta_{i})=0$, which
proves that $I_{\Lambda} = I_{\zeta}$ is a radical ideal.
As the quotient $\AA$ is generated by the interpolation polynomials
$e_{1},\ldots,e_{r}$, $H_{\Lambda}$ is of rank $r$.

Conversely, if $\rank H_{\Lambda}=r$, by Proposition \ref{prop:form-decomp}
$ \Lambda = \sum_{i = 1}^r \mat{1}_{\zeta_i} \circ p_i ( \bvec{\partial})$
with a polynomial of degree $0$, since the multiplicity of $\zeta_{i}$ is
$1$. This concludes the proof of the equivalence.
\end{proof}

In order to compute the zeroes of an ideal $I_{\Lambda}$ when we know a basis
of $\AA$, we exploit the properties of the operators of multiplication in
$\AA$:
$M_{a} : \AA \rightarrow \AA$, such that
$\forall b \in \AA, M_{a}(b)= a\, b$ and its transposed operator $M_{a}^{t} :
\dAA \rightarrow \dAA$, such that for 
$\forall  \gamma \in \dAA,  M_{a}^{\top}(\gamma) = a\star \gamma$.

The following proposition expresses a similar result, based on the properties
of the duality. 
\begin{proposition}
  \label{prop:Hl-mx}
  For any linear form $\Lambda \in \dual{R}$ such that 
  $\rank H_{\Lambda} < \infty$ and any $a \in \AA$, we have
  \begin{equation}
    H_{a \star \Lambda} = M_a^{t} \circ H_{\Lambda} \label{eq:multa}
  \end{equation} 
\end{proposition}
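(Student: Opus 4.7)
The plan is to unwind the definitions on both sides, check that everything descends to the quotient $\AA$ and its dual, and then invoke the associativity of the $R$-module structure on $\dual R$.

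First I would verify that the two operators in the claimed identity really have the same source and target. The map $H_{a\star\Lambda}$ sends $p\in R$ to $p\star(a\star\Lambda)\in\dual R$. To form $M_a^t\circ H_\Lambda$ one needs the image of $H_\Lambda$ to lie in $\dAA$, so that $M_a^t$ can act on it. For any $p\in R$ and any $q\in I_\Lambda$, the ideal property (Proposition \ref{prop:kernel-is-ideal}) gives $pq\in I_\Lambda$, hence $(p\star\Lambda)(q)=\Lambda(pq)=0$; thus $p\star\Lambda$ vanishes on $I_\Lambda$ and belongs to $I_\Lambda^\perp=\dAA$. In the same way, if $a,a'\in R$ represent the same class in $\AA$, then $a-a'\in I_\Lambda$, so $(a-a')\star\Lambda=0$; consequently $a\star\Lambda$ depends only on the class of $a$ in $\AA$, and $M_a^t$ is a well-defined endomorphism of $\dAA$.

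Next I would carry out the core computation, which is just the associativity $\Lambda(apq)=\Lambda(paq)$ rewritten in terms of $\star$. For every $p,q\in R$,
\begin{displaymath}
\bigl(p\star(a\star\Lambda)\bigr)(q)=(a\star\Lambda)(pq)=\Lambda(apq)=\Lambda(paq)=(p\star\Lambda)(aq)=\bigl(a\star(p\star\Lambda)\bigr)(q).
\end{displaymath}
Since this holds for all $q$, we obtain $p\star(a\star\Lambda)=a\star(p\star\Lambda)$ in $\dual R$, i.e.
\begin{displaymath}
H_{a\star\Lambda}(p)=p\star(a\star\Lambda)=a\star(p\star\Lambda)=M_a^t\bigl(H_\Lambda(p)\bigr),
\end{displaymath}
which is the desired equality of operators.

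There is no real obstacle: the identity is essentially a restatement of the fact that the $\star$-action of $R$ on $\dual R$ is commutative in the two slots, the content being that everything is compatible with the quotient $\AA$. The only point that requires a little care is the bookkeeping for the target space, namely that $H_\Lambda$ lands in $\dAA$ and that $a\star\Lambda$ is well defined for $a\in\AA$; once this is observed, the equality on each $p\in R$ follows from a one-line application of associativity of multiplication in $R$.
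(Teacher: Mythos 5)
Your proof is correct and is essentially the paper's own argument: the paper also reduces the identity to the one-line computation $H_{a\star\Lambda}(p) = ap\star\Lambda = a\star(p\star\Lambda) = M_a^{t}\circ H_\Lambda(p)$, which is exactly your associativity step evaluated at $q$. Your additional verification that $p\star\Lambda$ vanishes on $I_\Lambda$ and that $a\star\Lambda$ depends only on the class of $a$ in $\AA$ is a sensible (and correct) piece of bookkeeping that the paper leaves implicit.
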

\begin{proof}
  By definition, $\forall p \in R, H_{a \star \Lambda} (p) = a\, p\star \Lambda
  = a\star  (p\star \Lambda) = M_{a}^{\top} \circ H_{\Lambda} (p)$.
\end{proof}
We have the following well-known theorem: 
\begin{theorem}   
  \label{th:stickelberger}
  Assume that $\AA$ is a finite dimensional vector space. Then 
  $\Lambda = \sum_{i = 1}^d \mat{1}_{\zeta_i} \circ p_i ( \bvec{\partial})$
  for $\zeta_{i}\in \kk^{n}$ and 
  $p_i ( \partial) \in \kk[ \partial_1, \dots, \partial_n]$ and 
  \begin{itemize}
  \item the eigenvalues of the operators ${M}_a$ and ${M}^{t}_a$,  
    are given by $\Set{ a( \zeta_1), \dots, a( \zeta_r)}$.
  \item the common eigenvectors of the operators $({M}_{x_i}^{t})_{1 \leq i \leq n}$ are
    (up to scalar) $\mat{1}_{\zeta_{i}}$.
  \end{itemize}
\end{theorem}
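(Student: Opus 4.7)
The plan is to leverage the primary decomposition of the finite dimensional quotient algebra $\AA$ together with the Gorenstein property already established. Since $\AA$ is finite dimensional, Proposition~\ref{prop:form-decomp} applies and yields the decomposition $\Lambda = \sum_{i=1}^d \mat{1}_{\zeta_i}\circ p_i(\bvec{\partial})$, where $\zeta_1,\ldots,\zeta_d$ are precisely the zeros of the radical of $I_\Lambda$; this takes care of the first sentence of the statement.

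For the eigenvalue assertion, I would invoke the Chinese Remainder Theorem for the zero-dimensional algebra $\AA$ to obtain a decomposition $\AA\cong\bigoplus_{i=1}^{d}\AA_i$, where each $\AA_i$ is local Artinian with maximal ideal $\mathfrak{m}_i$ generated by the classes of $x_j-\zeta_{i,j}$ for $j=1,\ldots,n$. Since $\mathfrak{m}_i$ is nilpotent in $\AA_i$, the multiplication $M_a$ restricted to $\AA_i$ equals $a(\zeta_i)\cdot\mathrm{Id}+N_{i,a}$ where $N_{i,a}$ is nilpotent; consequently the only eigenvalue of $M_a$ on $\AA_i$ is $a(\zeta_i)$. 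Summing over $i$ gives the spectrum $\{a(\zeta_1),\ldots,a(\zeta_d)\}$ for $M_a$ on $\AA$, and since transposition preserves eigenvalues, the same holds for $M_a^t$.

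For the common eigenvectors, I would first observe that each $\mat{1}_{\zeta_i}$ lies in $\dAA$, because $\zeta_i$ is a zero of $I_\Lambda$ so $\mat{1}_{\zeta_i}(p)=p(\zeta_i)=0$ for every $p\in I_\Lambda$. The direct computation
\[
(M_{x_j}^{t}\mat{1}_{\zeta_i})(p)=\mat{1}_{\zeta_i}(x_j p)=\zeta_{i,j}\,p(\zeta_i)=\zeta_{i,j}\,\mat{1}_{\zeta_i}(p)
\]
shows that $\mat{1}_{\zeta_i}$ is a simultaneous eigenvector of $(M_{x_j}^{t})_{1\le j\le n}$ with eigenvalue vector $\zeta_i$. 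The main obstacle is the uniqueness up to scalar, which I would handle as follows: by the spectral argument above, any simultaneous eigenvector with eigenvalue vector $\zeta_i$ must vanish on the components $\AA_k$ for $k\neq i$, hence lies in $\dual{\AA_i}$. Within $\dual{\AA_i}$, the condition $M_{x_j}^{t}\gamma=\zeta_{i,j}\gamma$ for every $j$ is equivalent to $(x_j-\zeta_{i,j})\star\gamma=0$, i.e.\ $\gamma$ annihilates $\mathfrak{m}_i\AA_i$. This annihilator is the socle of $\AA_i$, and since $\AA$ is Gorenstein (as established earlier) each local factor $\AA_i$ is itself Gorenstein, so its socle is one-dimensional. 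Therefore the common eigenspace is spanned by $\mat{1}_{\zeta_i}$, completing the proof.
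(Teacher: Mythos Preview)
Your argument is substantially correct, but note that the paper itself does not give a proof of this theorem at all: it simply cites \cite{CLO2,CLO,EM08}. So there is no ``paper's own proof'' to compare against; you have supplied what the authors outsourced to standard references.

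Two remarks on your write-up. First, your appeal to the Gorenstein property in the last step is unnecessary and slightly muddled. The set of $\gamma\in\dual{\AA_i}$ satisfying $(x_j-\zeta_{i,j})\star\gamma=0$ for all $j$ is precisely the annihilator of $\mathfrak{m}_i$ inside $\dual{\AA_i}$; this annihilator is canonically $(\AA_i/\mathfrak{m}_i)^{*}\cong\kk^{*}$, which is one-dimensional simply because $\kk$ is algebraically closed and $\mathfrak{m}_i$ is maximal. Calling this space ``the socle of $\AA_i$'' conflates $\AA_i$ with its dual; the identification only makes sense \emph{after} invoking the Gorenstein isomorphism $\AA_i\cong\dual{\AA_i}$, but you do not need that isomorphism to conclude one-dimensionality. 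Second, you should also remark that the eigenvalue tuples $(\zeta_{i,1},\ldots,\zeta_{i,n})$ are pairwise distinct (which follows since the $\zeta_i$ are distinct points), so that the common eigenspaces for different $i$ are genuinely separate and the list $\mat{1}_{\zeta_1},\ldots,\mat{1}_{\zeta_d}$ exhausts the common eigenvectors up to scalar.

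With those adjustments your proof is a clean, self-contained version of the classical Stickelberger-type argument found in the cited references.
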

\begin{proof}
  \cite{CLO2,CLO,EM08}
\end{proof} 
Using the previous proposition, one can recover the points  $\zeta_{i}\in
\kk^{n}$ by eigenvector computation as follows.
Assume that $B\subset R$ with $|B|=\rank( H_{\Lambda})$, then equation \eqref{eq:multa}
and its transposition yield
$$ 
\mat{H}^{B}_{a \star \Lambda} = \mat{M}_{a}^{t} \mat{H}^{B}_{\Lambda} =
\mat{H}^{B}_{\Lambda} \, \mat{M}_{a},
$$  
where $\mat{M}_{a}$ is the matrix of multiplication by $a$ in the basis $B$
of $\AA$. By Theorem \ref{th:stickelberger}, the common solutions of the generalized
eigenvalue problem 
\begin{equation}\label{eq:geiv}
  (\mat{H}_{a \star \Lambda} - \lambda \, \mat{H}_{\Lambda}) \bvec{v} = \mat{O}
\end{equation}
for all $a\in R$, 
yield the common eigenvectors $\mat{H}^{B}_{\Lambda} \bvec{v}$ of
$\mat{M}_a^{t}$, that is the evaluation $\mat{1}_{\zeta_{i}}$ at the roots. 
Therefore, these common eigenvectors $\mat{H}^{B}_{\Lambda} \bvec{v}$ are up to a
scalar, the vectors $[b_{1}(\zeta_{i}),\ldots,b_{r}(\zeta_{i})]$
$(i=1,\ldots,r)$.
Notice that it is sufficient to compute the common eigenvectors of 
 $(\mat{H}_{x_{i} \star \Lambda} , \mat{H}_{\Lambda})$ for $i=1,\ldots, n$

If $\Lambda = \sum_{i = 1}^d \lambda_{i} \mat{1}_{\zeta_i}$ $(\lambda_{i}\neq
0)$, then the roots are simple, and one eigenvector computation is enough:
for any $a\in R$, $\mat{M}_{a}$ is diagonalizable and the
generalized eigenvectors $\mat{H}^{B}_{\Lambda} \bvec{v}$ are, up to a scalar,
the evaluation $\mat{1}_{\zeta_{i}}$ at the roots.

\section{Truncated Hankel operators}\label{sec:trunc:hankel}\label{sec:5}
Coming back to our problem of symmetric tensor decomposition, $f
=\sum_{|\alpha|\leq d} c_{\alpha}{d \choose \alpha} \x^{\alpha} \in R_{d}$ admits
an affine decomposition of rank $r$, iff $\Lambda(\x^{\alpha})=c_{\alpha}$ for
all $|\alpha|\leq d$ where 
$$ 
\Lambda = \sum_{i=1}^{r} \lambda_{i}\, \mat{1}_{\zeta_{i}},
$$ 
for some distinct $\zeta_{1},\ldots, \zeta_{r} \in \kk^{n}$ and
some $\lambda_{i}\in \kk-\{0\}$. 

Then, by theorem \ref{prop:decomp-rank},
 $H_{\Lambda}$ is of rank $r$ and $I_{\Lambda}$ is radical. 

Conversely, given $H_{\Lambda}$ of rank $r$ with  $I_{\Lambda}$ radical
which coincides up to degree $d$ with $\Lambda_{d}$, by proposition
\ref{prop:form-decomp}, 
$\Lambda= \sum_{i=1}^{r} \lambda_{i}\, \mat{1}_{\zeta_{i}}$ and 
$f$ can be decomposed as a sum of $r$ $d^{th}$-powers of linear forms.

The problem of decomposition of $f$ can thus be reformulated as follows:
\begin{quote}\em 
  Given $\dual{f}\in \dual{R_{d}}$ find the smallest $r$ such that there exists 
  $\Lambda \in  \dual{R}$ which extends $\dual{f}$  with $H_{\Lambda}$ of rank $r$
  and $I_{\Lambda}$ a radical ideal.
\end{quote}
In this section, we are going to characterize under which conditions
$\dual{f}$ can be extended to $\Lambda \in \dRR$ with $H_{\Lambda}$ is of
rank $r$.

We need the following technical property on the bases of $\AA$, that we will consider:
\begin{definition}
Let $B$ be a subset of monomials in $R$. We say that $B$ is
connected to $1$ if $\forall m\in B$ either $m=1$ or there exists $i\in
[1,n]$ and $m'\in B$ such that $m=x_{i}\, m'$.
\end{definition}

Let $B \subset  R_{d}$ be a set of monomials of degree $\le d$, connected to $1$.
We consider the formal Hankel matrix 
$$ 
\form{H}^{B}_{\Lambda} = (h_{\alpha+\beta})_{\alpha, \beta\in B},
$$
with $h_{\alpha}=\dual{f}(\x^{\alpha})=c_{\alpha}$ if 
$|\alpha|\le d$ and otherwise $h_{\alpha}$ is a variable.
The set of all these new variables is denoted $\hb$.

Suppose that $\form{H}^{B}_{\Lambda}$ is invertible in $\kk(\hb)$, then we
define the formal multiplication operators 
$$
\form{M}_{i}^{B}(\hb) := (\form{H}^{B}_{\Lambda})^{-1} \form{H}^{B}_{x_{i}\star
  \Lambda}.
$$
The following result characterizes the cases where 
$\kk[\x] = B \oplus I_{\Lambda}$:
\begin{theorem}\label{th:commute}
Let $B=\{\x^{\beta_{1}},\ldots,\x^{\beta_{r}}\}$ be a set of monomials of
degree $\le d$, connected to $1$ and $\Lambda$ be a linear form in
$\dual{\<B\cdot B^{+}\>}_{d}$. Let $\Lambda(\hb)$ be the linear form of $\dual{\<B\cdot B^{+}\>}$ defined by $\Lambda(\hb)(\x^{\alpha})=\Lambda(\x^{\alpha})$ if $|\alpha|\le d$ and $h_{\alpha} \in \kk$ otherwise. Then, $\Lambda(\hb)$ admits an extension $\tilde{\Lambda} \in \dRR$ such that $H_{\tilde{\Lambda}}$ is of rank $r$ with $B$ a basis of $A_{\tilde{\Lambda}}$ iff 
\begin{equation}
  \form{M}_{i}^{B}(\hb) \circ \form{M}_{j}^{B}(\hb)
  -\form{M}_{j}^{B}(\hb) \circ \form{M}_{i}^{B}(\hb)=0 \ \ (1 \le i<j\le n)
\end{equation}
and $\det(\form{H}^{B}_{\Lambda})(\hb) \neq 0$. Moreover, such a $\tilde{\Lambda}$ is unique.
\end{theorem}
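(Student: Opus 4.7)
The plan is to handle the two directions separately and conclude with uniqueness.

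For the forward direction, I would observe first that if $B$ is a basis of the (Gorenstein) quotient $\AA$ for $\tilde\Lambda$, then the Gram matrix $\mat{H}^{B}_{\tilde\Lambda}$ of the non-degenerate bilinear form $Q_{\tilde\Lambda}$ restricted to $\<B\>\times\<B\>$ is invertible, yielding $\det(\form{H}^{B}_{\Lambda})(\hb)\neq 0$ at the values of $\hb$ dictated by $\tilde\Lambda$. Proposition~\ref{prop:Hl-mx}, in its matrix form $\mat{H}^{B}_{x_{i}\star\tilde\Lambda}=\mat{H}^{B}_{\tilde\Lambda}\mat{M}_{x_{i}}$, then identifies the formal operator $\form{M}_{i}^{B}(\hb)$ with the honest multiplication matrix $\mat{M}_{x_{i}}$ in $\AA$; since $\AA$ is commutative, the $\form{M}_{i}^{B}$ commute.

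For the converse, commutation of the $\form{M}_{i}^{B}$ lets me extend $x_{i}\mapsto\form{M}_{i}^{B}$ to a ring homomorphism $\rho\colon R\to\Hom_{\kk}(\<B\>,\<B\>)$. I set $I=\ker\rho$ and define $\pi\colon R\to\<B\>$ by $\pi(p)=\rho(p)\,e_{1}$, where $e_{1}$ is the basis vector attached to $1\in B$. The first key step is $\pi(\x^{\beta_{j}})=e_{\beta_{j}}$ for every $\beta_{j}\in B$, which I would prove by induction on $|\beta_{j}|$ using connectedness to $1$: when $\x^{\beta_{j}}=x_{i}\,\x^{\beta_{k}}$ with $\x^{\beta_{k}}\in B$, the $k$-th column of $\form{H}^{B}_{x_{i}\star\Lambda}$ coincides with the $j$-th column of $\form{H}^{B}_{\Lambda}$, hence $\form{M}_{i}^{B}e_{k}=e_{j}$. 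Commutation then yields $\ker\pi=\ker\rho=I$, since $\rho(p)e_{1}=0$ forces $\rho(p)e_{\beta_{j}}=\rho(\x^{\beta_{j}})\rho(p)e_{1}=0$ for all $j$; consequently $R=\<B\>\oplus I$ and $\dim R/I=r$.

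Next I would define the candidate $\tilde\Lambda(p):=\sum_{j=1}^{r}\pi(p)_{j}\,\Lambda(\x^{\beta_{j}})$ and check that it extends $\Lambda(\hb)$ on $\<B\cdot B^{+}\>$. This last verification is the main obstacle: one must iterate the defining identity $\form{H}^{B}_{\Lambda}\form{M}_{i}^{B}=\form{H}^{B}_{x_{i}\star\Lambda}$ to obtain $\mat{H}^{B}_{\Lambda}\rho(\x^{\beta})=\mat{H}^{B}_{\x^{\beta}\star\Lambda}$ for every monomial $\x^{\beta}$ built up along a path of basis elements, and commutation of the $\form{M}_{i}^{B}$ is exactly what guarantees that this iteration is path-independent and thus delivers a consistent value on each $\x^\alpha\in B\cdot B^{+}$. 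Once the extension property is established, $I\subseteq I_{\tilde\Lambda}$ follows from $\pi(pq)=\rho(q)\rho(p)e_{1}=0$ whenever $p\in I$; combining $\dim R/I=r$ with invertibility of $\mat{H}^{B}_{\tilde\Lambda}=\mat{H}^{B}_{\Lambda}$ forces $I=I_{\tilde\Lambda}$, so $\rank(H_{\tilde\Lambda})=r$ and $B$ is a basis of $\AA$.

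Uniqueness is then immediate: any admissible $\tilde\Lambda$ vanishes on $I_{\tilde\Lambda}$, while its values on the basis $B$ are prescribed by $\Lambda(\hb)$, so the decomposition $R=\<B\>\oplus I_{\tilde\Lambda}$ pins $\tilde\Lambda$ down completely.
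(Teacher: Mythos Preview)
Your proof is correct and follows essentially the same path as the paper's: Proposition~\ref{prop:Hl-mx} for the forward direction, and for the converse defining $\tilde\Lambda(p)=\Lambda(\rho(p)(1))$ (the paper writes this as $\Lambda(p(\form{M})(1))$) and verifying the extension by an induction along $B$ using connectedness to~$1$. The only differences are cosmetic: you reconstruct the direct-sum decomposition $R=\langle B\rangle\oplus I$ by hand through the ring homomorphism $\rho$ where the paper simply cites \cite{mourrain-aaecc-1999}, and for uniqueness the paper invokes Proposition~\ref{prop:basis:ideal} to pin down $I_{\tilde\Lambda}$ whereas you rely on the identification $I_{\tilde\Lambda}=\ker\rho$ already established in the converse step.
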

\begin{proof} 
If there exists $\tilde{\Lambda} \in \dRR$ which extends
$\Lambda(\hb)$, with $H_{\tilde{\Lambda}}$ of rank $r$ then the
tables of multiplications by the variables $x_{i}$ are 
$M_{i}=(\form{H}^{B}_{\Lambda})^{-1} \form{H}^{B}_{x_{i}\star \Lambda}$
(proposition \ref{prop:Hl-mx}) and they commute.

Conversely suppose that these matrices commute. 
Then by \cite{mourrain-aaecc-1999}, we have 
$\kk[\x] = \<B\> \oplus (K)$, where
$K$ is the vector space generated by the border relations $x_{i} m -
\form{M}_{i} (m)$ for $m\in B$ and $i=1,\ldots, n$.  Let $\pi_{B}$ be the
projection of $R$ on $\<B\>$ along $(K)$.

We define $\tilde{\Lambda} \in \dRR$ as follows:  
$\forall p\in \RR, \tilde{\Lambda}(p)=\Lambda(p(\form{M})(1))$ where
$p(\form{M})$ is the operator obtained by substitution of the variables
$x_{i}$ by the commuting operators $\form{M}_{i}$.
Notice that $p(\form{M})$ is also the operator of multiplication by $p$
modulo $(K)$. 

By construction, $(K) \subset \ker H_{\tilde{\Lambda}}$ and $B$ is a
generating set of $\Ac_{\tilde{\Lambda}}$.

Let us prove by induction on the degree of $b\in B$ that for all $b'\in B$, we
have $\Lambda(b\, b')= \Lambda(b(M)\, (b'))$. The property is true for $b=1$.
As $B$ is connected to $1$, if $b\neq 1$, then $b = x_{i}\, b''$ for some
variable $x_{i}$ and some element $b''\in B$ of degree smaller than $b$. 
By construction of the operators $M_{i}$, we have $\Lambda(x_{i} b''\, b')=
\Lambda(b'' M_{i}(b'))$. 
By induction hypothesis, we deduce that 
$\Lambda(b \, b')= \Lambda(b''(M) \circ M_{i}(b'))= \Lambda(b(M)(b'))$.
As $b'=b'(M)(1)$ for all $b'\in B$ (the multiplication of $1$ by $b$
is represented by $b\in \<B\>$ modulo $(K)$), we deduce that  
$$ 
\Lambda(b \, b')= \Lambda(b''(M) \circ M_{i}(b'))= \Lambda(b(M)\circ b'(M)(1))=
\Lambda((b\, b')(M)(1))= \tilde{\Lambda}(b\,b').
$$
This shows that $\Lambda=\tilde{\Lambda}$ on $B \cdot B$. As
$\det(\form{H}^{B}_{\Lambda})\neq 0$, we deduce that $B$ is a basis of
$\Ac_{\tilde{\Lambda}}$ and that $H_{\tilde{\Lambda}}$ is of rank $r$.

If there exists another $\Lambda' \in \dRR$ which extends $\Lambda(\hb) \in
\dual{\<B\cdot B^{+}\>}$ with $\rank H_{\Lambda'}=r$, by proposition
\ref{prop:basis:ideal}, $\ker H_{\Lambda'}$ is generated by
$\ker H_{\Lambda'}^{B\cdot B^{+}}$ and thus coincides with $\ker
H_{\tilde{\Lambda}}$. As $\Lambda'$ coincides with $\tilde{\Lambda}$ on $B$,
the two elements of $\dRR$ must be equal.  This ends the proof of the
theorem.
\end{proof}
 
The degree of these commutation relations is at most $2$ in the coefficients
of the multiplications matrices $\form{M}_{i}$. A direct computation yields the following, for 
$m\in B$:
\begin{itemize}
\item If $x_{i}, m \in B, x_{j}\,m \in B$ then $(\form{M}_{i}^{B} \circ \form{M}_{j}^{B}
  -\form{M}_{j}^{B} \circ \form{M}_{i}^{B})(m)\equiv 0$ in $\kk(\hb)$.
\item If $x_{i} m \in B$, $x_{j}\,m \not\in B$ then 
  $(\form{M}_{i}^{B} \circ \form{M}_{j}^{B} - \form{M}_{j}^{B} \circ
  \form{M}_{i}^{B})(m)$ is of degree $1$ in the coefficients of
  $\form{M}_{i},\form{M}_{j}$.
\item If $x_{i} m \not\in B$, $x_{j}\,m \not\in B$ then 
  $(\form{M}_{i}^{B} \circ \form{M}_{j}^{B} - \form{M}_{j}^{B}
  \circ \form{M}_{i}^{B})(m)$ is of degree $2$ in the coefficients of
  $\form{M}_{i},\form{M}_{j}$.
\end{itemize}

We are going to give an equivalent characterization of the extension 
property, based on rank conditions.

\begin{theorem}\label{thm:rank}
Let $B=\{\x^{\beta_{1}},\ldots,\x^{\beta_{r}}\}$ be a set of monomials of
degree $\le d$, connected to $1$. Then, the linear form $\dual{f}\in
\dual{S}_{d}$ admits an extension $\Lambda \in \dRR$ such that
$H_{\Lambda}$ is of rank $r$ with $B$ a basis of $\AA$ iff there exists an $\hb$ such that all
$(r+1)\times(r+1)$ minors of $\form{H}^{B^{+}}_{\Lambda}(\hb)$ vanish and
$\det(\form{H}^{B}_{\Lambda})(\hb) \neq 0$.
\end{theorem}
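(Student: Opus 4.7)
The plan is to prove the two implications separately, reducing the nontrivial direction to Theorem \ref{th:commute}.

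For the forward direction, if such an extension $\Lambda \in \dRR$ exists, I would specialize $\hb$ to the actual values of $\Lambda$ on the monomials of $B^+ \cdot B^+$. Proposition \ref{prop:basis:ideal} immediately gives $\det \form{H}^{B}_{\Lambda}(\hb) \neq 0$ since $B$ is a basis of $\AA$; moreover $\form{H}^{B^+}_{\Lambda}$ is a submatrix of $\mat{H}_{\Lambda}$, whose rank equals $\dim \AA = r$, so it has rank exactly $r$ and every $(r+1)\times(r+1)$ minor vanishes.

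For the backward direction, fix $\hb$ making $\det(\form{H}^{B}_{\Lambda})(\hb) \neq 0$ and all $(r+1)\times(r+1)$ minors of $\form{H}^{B^+}_{\Lambda}(\hb)$ vanish. This specialization yields a linear form $\Lambda$ on $\<B^+ \cdot B^+\>$. My strategy is to verify the commutation hypothesis of Theorem \ref{th:commute} for the formal multiplication matrices $\form{M}_i = \form{M}_i^B(\hb)$, after which Theorem \ref{th:commute} produces the desired $\tilde{\Lambda} \in \dRR$ of rank $r$ with $B$ as a basis. The key observation is that the rank condition on $\form{H}^{B^+}_{\Lambda}$ combined with invertibility of the $B$-block yields the direct sum decomposition $\<B^+\> = \<B\> \oplus K$ where $K = \ker \form{H}^{B^+}_{\Lambda}$. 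Let $\pi : \<B^+\> \to \<B\>$ denote the projection along $K$, so that $m - \pi(m) \in K$ for all $m \in B^+$, and therefore
\begin{equation*}
\Lambda(c \cdot m) = \Lambda(c \cdot \pi(m)) \qquad \text{for all } c, m \in B^+.
\end{equation*}
By construction $\form{M}_i(b) = \pi(x_i b)$ for each $b \in B$, since $x_i b \in B^+$. Then for any $b, b' \in B$, using also that $x_i b' \in B^+$ and applying the projection identity above,
\begin{equation*}
\Lambda\bigl(b' \cdot \form{M}_i \form{M}_j(b)\bigr) = \Lambda\bigl((x_i b') \cdot \pi(x_j b)\bigr) = \Lambda(x_i x_j b b').
\end{equation*}
The right-hand side is manifestly symmetric in $i, j$, so $\Lambda\bigl(b' \cdot (\form{M}_i \form{M}_j - \form{M}_j \form{M}_i)(b)\bigr) = 0$ for every $b' \in B$. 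Since $(\form{M}_i \form{M}_j - \form{M}_j \form{M}_i)(b) \in \<B\>$ and $\form{H}^{B}_{\Lambda}$ is non-degenerate, we conclude $\form{M}_i \form{M}_j = \form{M}_j \form{M}_i$, and Theorem \ref{th:commute} completes the argument.

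The hardest part is precisely this commutation step. Its cleanness will rely on the combinatorial fact that $x_i b' \in B^+$ for every $b' \in B$, which lets me push the projection inside the $\Lambda$-pairing twice and land on the symmetric expression $\Lambda(x_i x_j b b')$; the non-degeneracy furnished by $\det \form{H}^{B}_{\Lambda} \neq 0$ then upgrades the resulting pairing identity into the operator identity that feeds into Theorem \ref{th:commute}.
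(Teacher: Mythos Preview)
Your argument is correct. The forward direction matches the paper's. For the converse, however, the paper does not go through Theorem~\ref{th:commute}: it simply invokes the sparse flat-extension theorem of Laurent and Mourrain \cite[Theorem~1.4]{ML08}, which directly yields the unique $\tilde{\Lambda}\in\dRR$ of rank $r$ agreeing with $\Lambda$ on $\<B^{+}\cdot B^{+}\>$ once $\form{H}^{B^{+}}_{\Lambda}(\hb)$ and $\form{H}^{B}_{\Lambda}(\hb)$ both have rank $r$.

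Your route is different and more self-contained: you extract from the rank hypothesis the decomposition $\<B^{+}\>=\<B\>\oplus\ker\form{H}^{B^{+}}_{\Lambda}$, identify $\form{M}_i$ with the projection $\pi(x_i\,\cdot\,)$, and use the identity $\Lambda(c\cdot m)=\Lambda(c\cdot\pi(m))$ for $c,m\in B^{+}$ twice to reduce $\Lambda(b'\cdot\form{M}_i\form{M}_j(b))$ to the symmetric quantity $\Lambda(x_ix_jbb')$; non-degeneracy of $\form{H}^{B}_{\Lambda}$ then forces commutation, and Theorem~\ref{th:commute} finishes. In effect you are reproving, within the paper, the relevant part of \cite{ML08}. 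What this buys you is that no external black box is needed; what the paper's citation buys is brevity, together with the uniqueness statement and the agreement on all of $\<B^{+}\cdot B^{+}\>$ (your use of Theorem~\ref{th:commute} a priori controls only $\<B\cdot B^{+}\>$, though the remaining entries are then determined by the rank condition).
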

\begin{proof}
Clearly, if there exists $\Lambda \in \dRR$ which extends
$\dual{f}\in \dual{S}_{d}$ with $H_{\Lambda}$ of rank $r$, then all
$(r+1)\times(r+1)$ minors of $\form{H}^{B^{+}}_{\Lambda}(\hb)$ vanish.

Conversely, if $\form{H}^{B^{+}}_{\Lambda}(\hb)$ and
$\form{H}^{B}_{\Lambda}(\hb)$ are of rank $r$, by \cite[Theorem 1.4]{ML08} there
exists a unique $\tilde{\Lambda}\in R^{*}$ such that $H_{\Lambda}$ is of rank
$r$, and which coincides with $\Lambda$ on $\<B^{+}\cdot B^{+}\>$.
\end{proof}

\begin{proposition}\label{prop:dec}
Let $B=\{\x^{\beta_{1}},\ldots,\x^{\beta_{r}}\}$ be a set of monomials of
degree $\le d$, connected to $1$. Then, the linear form $\dual{f}\in
\dual{S}_{d}$ admits an extension $\Lambda \in \dRR$ such that
$H_{\Lambda}$ is of rank $r$ with $B$ a basis of $\AA$ iff 
\begin{equation}\label{dec:H}
\mat{H}^{B^{+}}_{\Lambda} 
  =
  \left(
    \begin{array}{cc}
      \mat{H} & \mat{G} \\
      \mat{G}^{t} & \mat{J}  
    \end{array}
  \right),
\end{equation}
with $\mat{H}= \mat{H}^{B}_{\Lambda}$ and 
\begin{equation}\label{eq:W}
\mat{G}=\mat{H}\, \mat{W},
\mat{J}=\mat{W}^{t}\,\mat{H}\,\mat{W}. 
\end{equation} 
for some matrix $\mat{W}\in \kk^{B \times \partial B}$.
\end{proposition}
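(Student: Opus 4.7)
The plan is to recognize that \eqref{dec:H}--\eqref{eq:W} together are equivalent to the factorization
$$
\mat{H}^{B^{+}}_{\Lambda} = \begin{pmatrix} \mat{I} \\ \mat{W}^{t} \end{pmatrix}\, \mat{H}\, \begin{pmatrix} \mat{I} & \mat{W} \end{pmatrix},
$$
which makes the rank of $\mat{H}^{B^{+}}_{\Lambda}$ transparent and identifies the columns of $\mat{W}$ with the reductions modulo $I_{\Lambda}$ of the monomials of $\partial B$ against the basis $B$. Note that on either side of the claimed equivalence, $\mat{H} = \mat{H}^{B}_{\Lambda}$ must be invertible: on the left by Proposition~\ref{prop:basis:ideal} (since $B$ is a basis of $\AA$), and on the right implicitly, since $B$ is required to be a basis of the resulting quotient.

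For the forward direction, assume $\Lambda$ extends $\dual{f}$ with $\rank H_{\Lambda}=r$ and $B$ a basis of $\AA$. Define $\mat{W}$ column by column: for each $m \in \partial B$, the unique decomposition $m = \sum_{j} W_{j,m}\,b_{j} + e$ with $e \in I_{\Lambda} = \ker H_{\Lambda}$ gives the column of $\mat{W}$ indexed by $m$. Applying $\Lambda(\,\cdot\, p)$ to the relation $m - \sum_{j} W_{j,m}\,b_{j} \in I_{\Lambda}$ yields
$$
\Lambda(m\,p) = \sum_{j} W_{j,m}\,\Lambda(b_{j}\,p) \qquad \text{for every } p \in R.
$$
Specializing to $p = b \in B$ immediately gives $\mat{G} = \mat{H}\mat{W}$; specializing to $p = m' \in \partial B$ and substituting the analogous identity for $m'$ in $\Lambda(b_{j}\,m')$ gives $\mat{J} = \mat{W}^{t}\mat{H}\mat{W}$.

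For the converse, starting from \eqref{dec:H}--\eqref{eq:W} (with $\mat{H}$ invertible), the factorization above exhibits $\mat{H}^{B^{+}}_{\Lambda}$ as a product passing through the invertible $r\times r$ matrix $\mat{H}$, so it has rank exactly $r$. Hence every $(r+1)\times(r+1)$ minor of $\mat{H}^{B^{+}}_{\Lambda}$ vanishes while $\det \mat{H}\neq 0$, and the hypotheses of Theorem~\ref{thm:rank} are satisfied. That theorem produces the desired (and unique) extension $\Lambda\in\dRR$ with $H_{\Lambda}$ of rank $r$ and $B$ a basis of $\AA$.

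The main subtlety is the symmetric identity $\mat{J}=\mat{W}^{t}\mat{H}\mat{W}$ in the forward direction: one must reduce \emph{both} arguments of $\Lambda(m\,m')$ against the basis $B$ and verify that the resulting bilinear expression is unambiguous. This follows from the $R$-module structure of $\dual{R}$ combined with the symmetry $\Lambda(pq)=\Lambda(qp)$ inherited from the identification of $H_{\Lambda}$ with the bilinear form $Q_{\Lambda}$, but it is the one step where the calculation is not pure bookkeeping.
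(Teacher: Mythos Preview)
Your proof is correct, and the converse direction is exactly the paper's: reduce to Theorem~\ref{thm:rank} by reading off $\rank \mat{H}^{B^{+}}_{\Lambda}=r$ from the factorization $\begin{pmatrix}\mat{I}\\\mat{W}^{t}\end{pmatrix}\mat{H}\begin{pmatrix}\mat{I}&\mat{W}\end{pmatrix}$.

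The forward direction differs in flavor. The paper first invokes Theorem~\ref{thm:rank} to get $\rank \mat{H}^{B^{+}}_{\Lambda}=r$, and then argues purely in linear algebra: since the image of $\mat{G}$ lies in the image of $\mat{H}$, there exists $\mat{W}$ with $\mat{G}=\mat{H}\mat{W}$, and then the same column relations propagate to the lower-right block to give $\mat{J}=\mat{G}^{t}\mat{W}=\mat{W}^{t}\mat{H}\mat{W}$. You instead construct $\mat{W}$ explicitly as the matrix of reduction of the monomials of $\partial B$ against the basis $B$ modulo $I_{\Lambda}$, and read the two identities off from $\Lambda((m-\sum_{j}W_{j,m}b_{j})\,p)=0$. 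Your route is slightly more conceptual---it identifies $\mat{W}$ with the border relations that generate $I_{\Lambda}$ (cf.\ Proposition~\ref{prop:basis:ideal}) and makes the commutation picture of Theorem~\ref{th:commute} visible---while the paper's route is a self-contained rank argument that never needs to name $I_{\Lambda}$. Either way the ``subtlety'' you flag about $\mat{J}=\mat{W}^{t}\mat{H}\mat{W}$ is harmless: it is just the symmetry of $\mat{H}$ together with $\mat{G}=\mat{H}\mat{W}$, and does not require any separate well-definedness check.
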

\begin{proof} 
According to theorem \ref{thm:rank}, $\dual{f}\in \dual{S}_{d}$ admits a (unique)
extension $\Lambda \in \dRR$ such that 
$H_{\Lambda}$ is of rank $r$ with $B$ a basis of $\AA$, iff 
$H^{B^{+}}$ is of rank $r$.
Let us decompose $H^{B^{+}}$ as \eqref{dec:H} with $\mat{H}=
\mat{H}^{B}_{\Lambda}$. 

If we have $\mat{G}=\mat{H}\, \mat{W},
  \mat{J}=\mat{W}^{t}\,\mat{H}\,\mat{W}$, then 
$$
  \left(
    \begin{array}{cc}
      \mat{H} & \mat{H}\, \mat{W} \\
      \mat{W}^{t} \mat{H} & \mat{W}^{t}\,\mat{H}\,\mat{W}
    \end{array}
  \right)
$$
is clearly of rank  $\le \rank \mat{H}$.

Conversely, suppose that $\mat{H}^{B^{+}}_{\Lambda} = \rank \mat{H}$. 
This implies that the image of $\mat{G}$ is in the image of $\mat{H}$. Thus,
there exists  $\mat{W}\in \kk^{B \times \partial B}$ such that
$\mat{G}=\mat{H}\, \mat{W}$. 
Without loss of generality, we can assume that the $r$ first columns of
$\mat{H}$ ($r=\rank \mat{H}$) are linearly independent. Assume that 
we choose $\mat{W}$ such that the $i^{\mathrm{th}}$ column of
$\mat{G}$ is the linear combination of the $r$ first columns with
coefficients corresponding to the $i$ column $\mat{W}_{i}$ of $\mat{W}$. 
As $\rank \mat{H}^{B^{+}}_{\Lambda}=r$ the same relation holds for the whole column
of this matrix. Thus we have $\mat{J}= \mat{G}^{t}\, \mat{W}=
 \mat{W}^{t}\,\mat{H}\,\mat{W}$. 
\end{proof}
Notice that if $\mat{H}$ is invertible, $\mat{W}$ is uniquely determined.
In this case, we easily check that 
$\ker \mat{H}^{B^{+}}_{\Lambda} = \left(
\begin{array}{c}
\mat{W} \\
-\mat{I}
\end{array}\right)$.

This leads to the following system in the variables $\bvec{h}$ and the coefficients
$\bvec{w}$ of matrix $\mat{W}$. It characterizes the linear forms $\dual{f}\in
\dual{S}_{d}$ that admit an extension $\Lambda \in \dRR$ such that
$H_{\Lambda}$ is of rank $r$ with $B$ a basis of $\AA$.
\begin{equation} \label{syst:GH}
\form{H}_{\Lambda}^{B,\partial B}(\bvec{h}) - \form{H}_{\Lambda}^{B}(\bvec{h})\,
\mat{W}(\bvec{w}) =0, \ \ 
\form{H}_{\Lambda}^{\partial B, \partial B}(\bvec{h}) - 
\mat{W}^{t}(\bvec{w})\,\form{H}_{\Lambda}^{B}(\bvec{h})\,\mat{W}(\bvec{w}) =0
\end{equation}
with $\det(\form{H}_{\Lambda}^{B}(\bvec{h}))\neq 0$.

The matrix $\form{H}^{B^{+}}_{\Lambda}$ is a quasi-Hankel matrix
\cite{mp-jcomplexity-2000}, whose structure is imposed by equality (linear)
constraints on its entries.
If $\mat{H}$ is known (ie. $B\times B \subset R_{d}$, the number of
independent parameters in $\form{H}_{\Lambda}^{B,B^{+}}(\bvec{h})$ or in
$\mat{W}$ is the number  of monomials in $B \times \partial B - R_{d}$.   
By Proposition \ref{prop:dec}, the rank condition is equivalent to the quadratic
relations $\mat{J}- \mat{W}^{t} \mat{H}^{t}\, \mat{W}=0$ in these unknowns. 

If $\mat{H}$ is not completely known, the number of parameters in $\mat{H}$ 
is the number of monomials in $B\times B - R_{d}$. The number of independent parameters in
$\form{H}_{\Lambda}^{B,\partial B}(\bvec{h})$ or in $\mat{W}$ is then $B\times
\partial B - R_{d}$. 

The system \eqref{syst:GH} is composed of linear equations deduced from 
 quasi-Hankel structure, quadratic relations  for the entries in $B\times
\partial B$
and cubic relations  for the entries in $B\times \partial B$ in the unknown
parameters $\bvec{h}$ and $\bvec{w}$.

We are going to use explicitly these characterizations in the new algorithm
we propose for minimal tensor decomposition.

%
%
%
 
\section{Symmetric tensor decomposition algorithm}\label{sec:6}
The algorithm that we will present for decomposing a symmetric tensor as sum
of rank 1 symmetric tensors generalizes the algorithm of Sylvester
\cite{sylv-cr-1886}, devised for dimension 2 tensors, see also
\cite{cm-binary-arxiv-2001}. 


Consider the homogeneous polynomial $f(\bvec{x})$ in \eqref{eq:homo-poly}
that we want to decompose. We may assume without loss of generality, that for
at least one variable, say $x_0$, all its coefficients in the decomposition
are non zero, i.e. $k_{i,0} \not= 0$, for $1 \leq i \leq r$.  We dehomogenize
$f$ with respect to this variable and we denote this polynomial by $f^{a} :=
f(1,x_{1},\ldots,x_{n})$.
We want to decompose the polynomial $f^{a}(\bvec{x}) \in R_d$  as a sum of powers of linear forms, i.e.
$$ 
  f(\bvec{x}) = \sum_{i=1}^{r}{ \lambda_{i}\, ( 1  + k_{i,1} x_1 + \cdots + k_{i, n} x_n)^d} 
 =   \sum_{i=1}^{r}{
\lambda_{i}\, \bvec{k}_i(\bvec{x})^d}
$$
Equivalently, we want to decompose its corresponding dual element $\dual{f} \in
\dual{R_{d}}$ 
as a linear combination of evaluations  over the distinct points $ \bvec{k}_{i} := (k_{i,1},
\cdots, k_{i, n})$: 
$$ 
\dual{f} = \sum_{i=1}^{r}{ \lambda_{i}\, \mat{1}_{\bvec{k}_i}}
$$ (we refer the reader to the end of Section~\ref{sec:duality}).

Assume that we know the value of $r$. As we have seen previously, knowing 
the value of $\Lambda$ on polynomials of degree high enough, allows us to compute
the table of multiplications modulo the kernel of $\mat{H}_{\Lambda}$. 
By  Theorem ~\ref{th:stickelberger}, 
solving the generalized eigenvector problem
$(\mat{H}_{x_1 \star \Lambda} - \lambda \, \mat{H}_{\Lambda}) \bvec{v} = \mat{O}$,
we will recover the points of evaluation $\bvec{k}_{i}$. By solving a linear
system, we will then deduce the value of $\lambda_{i},\ldots,\lambda_{r}$. Thus,
the goal of the following algorithm is to extend $\dual{f}$ on a large enough
set of polynomials, in order to be able to run this eigenvalue computation.

\begin{algorithm}\caption{\textsc{Symmetric tensor decomposition}}
\KwIn{A homogeneous polynomial $f(x_{0},x_{1},\ldots,x_{n})$ of degree $d$.}
\KwOut{A decomposition of $f$ as $f=\sum_{i=1}^{r} \lambda_{i}\, \bvec{k}_i(\bvec{x})^d$ with $r$ minimal.}
  \begin{itemize}
  \item[--] Compute the coefficients of $\dual{f}$: 
$c_{\alpha}=a_{\alpha}\, {d \choose \alpha}^{-1}$, for $|\alpha|\leq d$;
  \item[--] $r:=1$;

  \item[--] \textbf{Repeat}
    \begin{enumerate}
    \item Compute a set $B$ of monomials of degree $\le d$ connected to 1
with $|B|=r$; 

    \item Find parameters $\hb$ s.t. 
$\det (\mat{H}^{B}_{\Lambda}) \neq 0$
and the      operators $\mat{M}_{i}=\mat{H}^{B}_{x_{i}\Lambda}(\mat{H}^{B}_{\Lambda})^{-1}$
      commute.
    \item If there is no solution, restart the loop with $r:=r+1$.

    \item Else compute the $n\times r$ eigenvalues $\zeta_{i,j}$ and the
eigenvectors $\mathbf{v}_{j}$ s.t. $\mat{M}_{i}\mathbf{v}_{j}
      =\zeta_{i,j}\mathbf{v}_{j}$, $i=1,\ldots,n$, $j=1,\ldots,r$.

    \end{enumerate}
    \textbf{until} the eigenvalues are simple.

  \item[--] Solve the linear system in $(\nu_j)_{j=1,\ldots,k}$:
    $\Lambda=\sum_{j=1}^{r}\nu_{j}\mathbf{1}_{\zeta_{j}}$ 
where $\zeta_{j}\in
    \kk^{n}$ are the eigenvectors found \newline 
in step 4.
  \end{itemize}
\end{algorithm}

The critical part in this algorithm is the completion of step 2.
Instead of the commutation relations, one can use the result of Proposition 
\ref{prop:dec}. 

\subsection{First Example}
\label{sec:ex-1}
The example that follows will make the steps of the algorithm clearer.

\begin{enumerate}
\item Convert the symmetric tensor to the corresponding homogeneous polynomial.
  
  \EXAMPLE{
    Assume that a tensor of dimension 3 and order 5, 
    or equivalently a 3-way array of dimension 5,
    corresponds to the following homogeneous polynomial \\
    
    $f = -1549440\,x_0x_1{x_2}^{3}+2417040\,x_0{x_1}^{2}{x_2}^{2}
    +166320\,{x_0}^{2}x_1{x_2}^{2}-829440\,x_0{x_1}^{3}x_2
    -5760\,{x_0}^{3}x_1x_2-222480\,{x_0}^{2}{x_1}^{2}x_2
    +38\,{x_0}^{5}-497664\,{x_1}^{5}-1107804\,{x_2}^{5}
    -120\,{x_0}^{4}x_1+180\,{x_0}^{4}x_{{2}}+12720\,{x_0}^{3}{x_1}^{2}
    +8220\,{x_0}^{3}{x_2}^{2}-34560\,{x_0}^{2}{x_1}^{3}-59160\,{x_0}^{2}{x_2}^{3}
    +831840\,x_0{x_1}^{4}+442590\,x_0{x_2}^{4}-5591520\,{x_1}^{4}x_2
    +7983360\,{x_1}^{3}{x_2}^{2}-9653040\,{x_1}^{2}{x_2}^{3}+5116680\,x_1{x_2}^{4}$.

    The minimum decomposition of the polynomial as a sum of powers of linear forms is 
    \begin{displaymath}
      ( x_0+2\,x_1+3\,x_2 )^5 + ( x_0-2\,x_1+3\,x_2 )^5 +
      \frac{1}{3} ( x_0 -12\,x_1-3\,x_2 )^5 + \frac{1}{5} ( x_0+12\,x_1-13\,x_2 )^5,
    \end{displaymath}
    that is, the corresponding tensor is of rank 4.
  }

\item Compute the actual number of variables needed.
  
  For algorithms computing the so-called number of {\em essential} variables, the
  reader may refer to the work of Oldenburger 
  \cite{olden-annals-1934} or Carlini \cite{carlini-aggm-2005}.
  
  \EXAMPLE{
    In our example the number of essential variable is 3, so we have nothing to do.
  }

\item Compute the matrix of the quotient algebra.

  We form a ${n+d-1 \choose d} \times {n+d-1 \choose d}$ matrix,
  the rows and the columns of which correspond to the coefficients
  of the polynomial in the dual base.
  The map for this is 
  \begin{displaymath}
    a_{j_0 \, j_1 \, \dots \, j_n} \mapsto  c_{j_0 \, j_1 \, \dots \, j_n} \,
:= a_{j_0 \, j_1 \, \dots \, j_n} \, 
    {d \choose j_{0}, \ldots, j_{n}}^{-1},
  \end{displaymath}
  where  $a_{d_0 \, d_1 \, \dots \, d_n}$ is the coefficient of the monomial
  $x_{0}^{j_{0}}\cdots x_{n}^{j_{n}}$ in $f$.
  Recall that, since the polynomial is homogeneous,
  $\sum_{i=1}^{n}{j_i} = d$.
  
  This matrix is called quasi-Hankel \cite{mp-jcomplexity-2000} or Catalecticant
  \cite{ia-book-1999}. 
  
  \EXAMPLE{
    Part of the corresponding matrix follows. The whole matrix is $21 \times 21$.
    We show only the $10 \times 10$  principal minor.
    \begin{displaymath}
      \scriptsize
      \left[ 
        \begin {array}{c|rrrrrrrrrr} 
          & 1 & x_1 & x_2 & x_1^2 & x_1 x_2 & x_2^2 & x_1^3 & x_1^2x_2 & x_1 x_2^2 &
          x_2^3\\ \hline

          1 & 38&-24&36&1272&-288&822&-3456&-7416&5544&-5916\\
          x_1 & -24&1272&-288&-3456&-7416&5544&166368&-41472&80568&-77472\\
          x_2 & 36&-288&822&-7416&5544&-5916&-41472&80568&-77472&88518\\
          x_1^2 & 1272&-3456&-7416&166368&-41472&80568&-497664&-1118304&798336&-965304\\
          x_1 x_2 & -288&-7416&5544&-41472&80568&-77472&-1118304&798336&-965304&1023336\\
          x_2^2 & 822&5544&-5916&80568&-77472&88518&798336&-965304&1023336&-1107804\\
          x_1^3 & -3456&166368&-41472&-497664&-1118304&798336&h_{{6,0,0}}&h_{{5,1,0}}&h_{{4,2,0}}&h_{{3,3,0}}\\
          x_1^2 x_2 & -7416&-41472&80568&-1118304&798336&-965304&h_{{5,1,0}}&h_{{4,2,0}}&h_{{3,3,0}}&h_{{2,4,0}}\\
          x_1 x_2^2 & 5544&80568&-77472&798336&-965304&1023336&h_{{4,2,0}}&h_{{3,3,0}}&h_{{2,4,0}}&h_{{1,5,0}}\\
          x_2^3 & -5916&-77472&88518&-965304&1023336&-1107804&h_{{3,3,0}}&h_{{2,4,0}}&h_{{1,5,0}}&h_{{0,6,0}}
        \end {array}
      \right] 
    \end{displaymath}

    Notice that we do not know the elements in some positions of the matrix. 
    In general we do not know the elements that correspond to monomials with (total) 
    degree higher than 5.
  }
  
\item Extract a principal minor of full rank.
  
  We should re-arrange the rows and the columns of the matrix so that there
  is a principal minor of full rank, $R$. We call this minor $\Delta_0$.  In
  order to do that we try to put the matrix in row echelon form, using
  elementary row and column operations.
  
  \EXAMPLE{
    In our example the $4 \times 4$ principal minor is of full rank, 
    so there is no need for re-arranging the matrix.
    The matrix $\Delta_0$ is 
    \begin{displaymath}
      \Delta_0 = 
      \left[ 
        \begin {array}{rrrr} 
          38&-24&36&1272\\\noalign{\medskip}
          -24&1272&-288&-3456\\\noalign{\medskip}
          36&-288&822&-7416\\\noalign{\medskip}
          1272&-3456&-7416&166368
        \end {array} 
      \right] 
    \end{displaymath}
    Notice that the columns of the matrix correspond to the monomials
    $\{ 1, x_1, x_2, x_1^2\}$.
  }

\item We compute the ``shifted'' matrix $\Delta_1 = x_1 \Delta_0$.

  The columns of $\Delta_0$ correspond to set of some monomials, 
  say $\set{ \bvec{x}^{\boldsymbol{\alpha}}}$ where $\boldsymbol{\alpha} \subset \NN^{n}$.
  The columns of $\Delta_1$ correspond to the set of monomials
  $\set{x_1 \, \bvec{x}^{\boldsymbol{\alpha}}}$.
  
  \EXAMPLE{
    The shifted matrix $\Delta_1$ is 
    \begin{displaymath}
      \Delta_1 = 
      \left[ 
        \begin {array}{rrrr} 
          -24&1272&-288&-3456\\\noalign{\medskip}
          1272&-3456&-7416&166368\\\noalign{\medskip}
          -288&-7416&5544&-41472\\\noalign{\medskip}
          -3456&166368&-41472&-497664
        \end {array} 
      \right] 
    \end{displaymath}
    Notice that the columns  correspond to the monomials
    $\{ x_1, x_1^2, x_1 x_2, x_1^3\}$, 
    which are just the corresponding monomials of the columns of $\Delta_0$,
    i.e. $\{ 1, x_1, x_2, x_1^2\}$, multiplied by $x_1$.
  }

  We assume for the moment that all the elements of the matrices $\Delta_0$ and
  $\Delta_1$ are known. If this is not the case, then we can compute the unknown entries
  of the matrix, using either necessary and sufficient conditions of the quotient algebra,
  e.g. it holds that $\mat{M}_{x_i} \mat{M}_{x_j} - \mat{M}_{x_j} \mat{M}_{x_i} = \mat{O}$ \cite{mourrain-aaecc-1999}
  for any $i, j \in \{1, \dots, n\}$.
  There are other algorithms to extend a moment matrix,
  e.g. \cite{laurent-ima-2008,laurent-ams-2005,cf-hjm-1991}.

\item We solve the equation $(\Delta_1 - \lambda \Delta_0) X = 0$.

  We solve the generalized eigenvalue/eigenvector problem using one of the well-known
  techniques \cite{gvl-book-1996}.
  We normalize the elements of the eigenvectors so that the first element is 1, 
  and we read the solutions from the coordinates of the (normalized) eigenvectors.
  
  \EXAMPLE{
    The normalized eigenvectors of the generalized eigenvalue problem are
    \begin{displaymath}
      \left[ \begin {array}{r} 
          1 \\ -12\\ -3\\  144
        \end {array} \right] ,
      \left[ \begin {array}{r} 
          1\\ 12\\ -13\\ 144
        \end {array} \right] , 
      \left[ \begin {array}{r} 
          1\\ -2\\ 3\\ 4
        \end {array} \right] , 
      \left[ \begin {array}{r} 
          1\\ 2\\ 3\\ 4
        \end {array}
      \right]
    \end{displaymath}

    The coordinates of the eigenvectors correspond to the elements
    $\{ 1, x_1, x_2, x_1^2\}$.
    Thus, we can recover the coefficients of $x_1$ and $x_2$ 
    in the decomposition from  coordinates of the eigenvectors.

    Recall that the coefficients of $x_0$ are considered to be one.
    Thus, The polynomial admits a decomposition
    \begin{displaymath}
      f =  \ell_1 ( x_0 -12 x_1  -3 x_2)^5 +
      \ell_2 ( x_0 + 12 x_1  -13 x_2)^5 +
      \ell_3 ( x_0  -2 x_1  +3 x_2)^5 +
      \ell_4 ( x_0 + 2 x_1  + 3 x_2)^5
    \end{displaymath}
    It remains to compute $\ell_i$'s. We can do this easily by solving an
    over-determined linear system, which we know that always has a solution, 
    since the decomposition exists.
    Doing that, we deduce that 
    $\ell_1 = 3$, $\ell_2 = 15$, $\ell_3 = 15$ and $\ell_4 = 5$.
  }
\end{enumerate}

\subsection{Second Example}
\label{sec:ex-2}

One of the assumptions that the previous example fulfills is that all the entries of the
matrices needed for the computations are known. 
However, this is not always the case as the following example shows.

\begin{enumerate}
\item Convert the symmetric tensor to the corresponding homogeneous polynomial. 
  \EXAMPLE{
    Consider a tensor of dimension 3 and order 4, 
    that corresponds to the following homogeneous polynomial
    $$f = 79\,x_0 x_1^3 + 56\,x_0^2 x_2^2 +49\,x_1^2 x_2^2 +4\,x_0 x_1 x_2^2 +57\,x_0^3x_1,$$
    the rank of which is 6.
  }

\item Compute the actual number of variables needed.
  \EXAMPLE{
    In our example the number of essential variables is 3, so we have nothing to do.
  }

\item Compute the matrix of the quotient algebra.
  \EXAMPLE{
    
    The matrix is $15 \times 15$.
    \begin{displaymath}
      \scriptsize     \left[ 
        \begin {array}{c|ccccccccccccccc} 
          &1&x_1&x_2&x_1^2&x_1x_2&x_2^2&x_1^3&x_1^2x_2&x_1x_2^2&x_2^3&x_1^4&x_1^3x_2&x_1^2x_2^2&x_1x_2^3& x_2^4 \\ \hline
          1&0&\frac{57}{4}&0&0&0&\frac{28}{3}&\frac{79}{4}&0&\frac{1}{3}&0&0&0&\frac{49}{6}&0&0 \\
          x_1&\frac{57}{4}&0&0&\frac{79}{4}&0&\frac{1}{3}&0&0&\frac{49}{6}&0&h_{500}&h_{410}&h_{320}&h_{230}&h_{140}\\
          x_2&0&0&\frac{28}{3}&0&\frac{1}{3}&0&0&\frac{49}{6}&0&0&h_{410}&h_{320}&h_{230}&h_{140}&h_{050}\\
          x_1^2&0&\frac{79}{4}&0&0&0&\frac{49}{6}&h_{500}&h_{410}&h_{320}&h_{230}&h_{600}&h_{510}&h_{420}&h_{330}&h_{240}\\
          x_1x_2&0&0&\frac{1}{3}&0&\frac{49}{6}&0&h_{410}&h_{320}&h_{230}&h_{140}&h_{510}&h_{420}&h_{330}&h_{240}&h_{150}\\
          x_2^2&\frac{28}{3}&\frac{1}{3}&0&\frac{49}{6}&0&0&h_{320}&h_{230}&h_{140}&h_{050}&h_{420}&h_{330}&h_{240}&h_{150}&h_{060}\\
          x_1^3&\frac{79}{4}&0&0&h_{500}&h_{410}&h_{320}&h_{600}&h_{510}&h_{420}&h_{330}&h_{700}&h_{610}&h_{520}&h_{430}&h_{340}\\
          x_1^2x_2&0&0&\frac{49}{6}&h_{410}&h_{320}&h_{230}&h_{510}&h_{420}&h_{330}&h_{240}&h_{610}&h_{520}&h_{430}&h_{340}&h_{250}\\
          x_1x_2^2&\frac{1}{3}&\frac{49}{6}&0&h_{320}&h_{230}&h_{140}&h_{420}&h_{330}&h_{240}&h_{150}&h_{520}&h_{430}&h_{340}&h_{250}&h_{160}\\
          x_2^3&0&0&0&h_{230}&h_{140}&h_{050}&h_{330}&h_{240}&h_{150}&h_{060}&h_{430}&h_{340}&h_{250}&h_{160}&h_{070}\\
          x_1^4&0&h_{500}&h_{410}&h_{600}&h_{510}&h_{420}&h_{700}&h_{610}&h_{520}&h_{430}&h_{800}&h_{710}&h_{620}&h_{530}&h_{440}\\
          x_1^3x_2&0&h_{410}&h_{320}&h_{510}&h_{420}&h_{330}&h_{610}&h_{520}&h_{430}&h_{340}&h_{710}&h_{620}&h_{530}&h_{440}&h_{350}\\
          x_1^2x_2^2&\frac{49}{6}&h_{320}&h_{230}&h_{420}&h_{330}&h_{240}&h_{520}&h_{430}&h_{340}&h_{250}&h_{620}&h_{530}&h_{440}&h_{350}&h_{260}\\
          x_1x_2^3&0&h_{230}&h_{140}&h_{330}&h_{240}&h_{150}&h_{430}&h_{340}&h_{250}&h_{160}&h_{530}&h_{440}&h_{350}&h_{260}&h_{170}\\
          x_2^4&0&h_{140}&h_{050}&h_{240}&h_{150}&h_{060}&h_{340}&h_{250}&h_{160}&h_{070}&h_{440}&h_{350}&h_{260}&h_{170}&h_{080}
        \end {array} \right] 
    \end{displaymath}
  }
  
\item Extract a principal minor of full rank.
  \EXAMPLE{
    In our example the $6 \times 6$ principal minor is of full rank.
    The matrix $\Delta_0$ is 
    \begin{displaymath}
      \Delta_0 = 
      \left[ 
        \begin {array}{cccccc} 
          0&{\frac{57}{4}}&0&0&0&{\frac{28}{3}}\\
          {\frac{57}{4}}&0&0&{\frac{79}{4}}&0&\frac{1}{3} \\
          0&0&{\frac{28}{3}}&0&\frac{1}{3}&0\\
          0&{\frac{79}{4}}&0&0&0&{\frac{49}{6}}\\
          0&0&\frac{1}{3}&0&{\frac{49}{6}}&0\\
          {\frac{28}{3}}&\frac{1}{3}&0&{\frac{49}{6}}&0&0
        \end {array} 
      \right] 
    \end{displaymath}
    The columns (and the rows) of the matrix correspond to the monomials
    $\{ 1, x_1, x_2, x_1^2, x_1x_2, x_2^2\}$.
  }

\item We compute the ``shifted'' matrix $\Delta_1 = x_1 \Delta_0$.
  \EXAMPLE{
    The shifted matrix $\Delta_1$ is 
    \begin{displaymath}      
      \Delta_1 = 
      \left[ 
        \begin {array}{cccccc} 
          {\frac{57}{4}}&0&0&{\frac{79}{4}}&0&\frac{1}{3}\\
          0&{\frac{79}{4}}&0&0&0&{\frac{49}{6}}\\
          0&0&\frac{1}{3}&0&{\frac{49}{6}}&0\\
          {\frac{79}{4}}&0&0&h_{{500}}&h_{{410}}&h_{{320}}\\
          0&0&{\frac{49}{6}}&h_{{410}}&h_{{320}}&h_{{230}}\\
          \frac{1}{3}&{\frac{49}{6}}&0&h_{{320}}&h_{{230}}&h_{{140}}
        \end {array} 
      \right] 
    \end{displaymath}
    The columns of the matrix correspond to the monomials
    $\{ x_1, x_1^2, x_1x_2, x_1^3, x_1^2x_2, x_1x_2^2\}$
    which are the monomials that correspond to the columns of $\Delta_0$,
    i.e. $\{ 1, x_1, x_2, x_1^2, x_1x_2, x_2^2\}$, multiplied by $x_1$.

    Since not all the entries of $\Delta_1$ are known, we need to compute them in order to proceed. 
  } 
  
  Consider the following method to extend the matrix of a quotient algebra.
  In the quotient algebra it holds that 
  $\mat{M}_{x_i} \mat{M}_{x_j} - \mat{M}_{x_j} \mat{M}_{x_i} = \mat{O}$ \cite{mourrain-aaecc-1999}
  for any $i, j \in \{1, \dots, n\}$,
  i.e. the matrices of multiplications commute (cf. Section \ref{sec:trunc:hankel}).

  From Proposition~\ref{prop:Hl-mx} we know that 
  $\Delta_i = \mat{M}_{x_i}^{t} \Delta_0$,
  and hence $\mat{M}_{x_i}^{t} = \Delta_i \Delta_0^{-1}$,
  for $1 \leq i \leq n$.
  
  We form all the possible matrix equations,
  $\mat{M}_{x_i} \mat{M}_{x_j} - \mat{M}_{x_j} \mat{M}_{x_i} = \mat{O}$,
  there are ${n \choose 2}$, and we equate their elements to zero.
  Since the dimension of the matrices is $r \times r$,
  this leads to at most ${n \choose 2} r^2$, 
  or $\OO(n^2 r^2)$ equations.
  Note that the equations are, at most of total degree 2.

  \EXAMPLE{
    In our example the matrix $\Delta_2$ is
    \begin{displaymath}      
      \Delta_2 = 
      \left[ 
        \begin {array}{cccccc} 
          0&0&{\frac{28}{3}}&0&\frac{1}{3}&0\\
          0&0&\frac{1}{3}&0&{\frac{49}{6}}&0\\
          {\frac{28}{3}}&\frac{1}{3}&0&{\frac{49}{6}}&0&0\\
          0&0&{\frac{49}{6}}&h_{{410}}&h_{{320}}&h_{{230}}\\
          \frac{1}{3}&{\frac{49}{6}}&0&h_{{320}}&h_{{230}}&h_{{140}}\\
          0&0&0&h_{{230}}&h_{{140}}&h_{{050}}
        \end {array} 
      \right] 
    \end{displaymath}

    Since we have only two variables, there is only one matrix equation,
    \begin{displaymath}
      \mat{M}_{x_i} \mat{M}_{x_j} - \mat{M}_{x_j} \mat{M}_{x_i} = 
      \Delta_1 \Delta_0^{-1} \Delta_2 \Delta_0^{-1}
      - \Delta_2 \Delta_0^{-1} \Delta_1 \Delta_0^{-1}
      = \mat{O}.      
    \end{displaymath}
    
    Many of the resuling equations are trivial. 
    After disgarding them, we have 6 unknonws
    $\set{ h_{500},h_{410},h_{320},h_{230},h_{140},h_{050} }$
    and 15 equations.

    A solution of the system is the following
    $$\set{ h_{500}=1, h_{410}=2, h_{320}=3,
      h_{230}= 1.5060, h_{140}= 4.960,h_{050}= 0.056}.$$
    
    We subsitute these values to $\Delta_1$ and we continue the algorithm as in the
    previous example.
  }

  Other algorithms to extend a moment matrix,
  e.g. \cite{laurent-ima-2008,laurent-ams-2005,cf-hjm-1991},
  the so called {\em flat estensions}, are applicable when the $\Delta_0$ is
  positive definite.

\item We solve the equation $(\Delta_1 - \lambda \Delta_0) X = 0$.
  \EXAMPLE{
    The normalized eigenvectors of the generalized eigenvalue problem are
    \begin{displaymath}
      \left[ \begin {array}{c}  
          1\\
          -0.830+1.593\,i\\
          -0.326-0.0501\,i\\ 
          -1.849-2.645\,i\\
          0.350-0.478\,i\\
          0.103+0.0327\,i
        \end {array} \right] , 
      \left[ \begin {array}{c}  
          1 \\
          -0.830 -1.593\,i\\ 
          -0.326 +0.050\,i\\
          -1.849 +2.645\,i\\
          0.350 +0.478\,i\\
          0.103 -0.032\,i\\
        \end {array} \right] , 
      \left[ \begin {array}{c}  
          1.0 \\
          1.142\\
          0.836\\ 
          1.305\\ 
          0.955\\ 
          0.699
        \end {array} \right],
    \end{displaymath}
    \begin{displaymath}
      \left[ \begin {array}{c} 
          1
          0.956\\-
          0.713\\ 
          0.914\\ 
          -0.682\\
          0.509
        \end {array} \right] ,
      \left[ \begin {array}{c}  
          1 \\
          -0.838 + 0.130\,i\\ 
          0.060+ 0.736\,i\\
          0.686-0.219\,i\\
          -0.147-0.610\,i\\
          -0.539+ 0.089\,i
        \end {array} \right] , 
      \left[ \begin {array}{c} 
          1 \\        
          - 0.838- 0.130\,i\\
          0.060- 0.736\,i \\
          0.686+ 0.219\,i\\
          - 0.147+ 0.610\,i\\
          - 0.539- 0.089\,i
        \end {array} \right] . 
    \end{displaymath}

    The coordinates of the eigenvectors correspond to the elements
    $\{ 1, x_1, x_2, x_1^2, x_1x_2, x_2^2\}$
    and we can recover the coefficients of $x_1$ and $x_2$ 
    in the decomposition.

    After, solving the over-constrained linear system for the coefficients of the linear
    forms we deduce the decomposition
    \begin{eqnarray*}
      (0.517+0.044\,i)  \left(x_{0}-(0.830-1.593\,i) x_{1} -(0.326+0.050\,i)
        x_{2}\right)^{4} \\
      +(0.517-0.044\,i) \left(x_{0}-(0.830+1.593\,i) x_{1} -(0.326-0.050\,i) x_{2}\right)^{4} \\
      +2.958 \left(x_{0}+(1.142)x_{1}+0.836x_{2}\right)^{4}\\
      +4.583 \left(x_{0}+(0.956)x_{1}-0.713x_{2}\right)^{4} \\
      -(4.288+1.119\,i)\left(x_{0}-(0.838-0.130\,i)x_{1}+(0.060+0.736\,i)x_{2}\right)^{4}\\
      -(4.288-1.119\,i)\left(x_{0}-(0.838+0.130\,i)x_{1}+(0.060-0.736\,i)x_{2}\right)^{4}
    \end{eqnarray*}
  }
\end{enumerate}

\section{Conclusions and future work}

We propose an algorithm that computes symmetric tensor decompositions, extending Sylvester's algorithm.
The main ingredients are 
i) reformulate the problem in a dual space,
ii) exploit the properties of multivariate Hankel operators and Gorenstein algebra,
iii) devise an effective method to solve, when necessary, the truncated Hankel problem,
iv) deduce the decomposition by solving a generalized eigenvalue problem.

There are several open questions that we are currently working on.
What is the (arithmetic and Boolean) complexity of the algorithm?
If we do not know all the the elements of the tensor, can we still compute a decomposition?

\subsection*{Acknowledgments.} 
This work is partially supported by contract ANR-06-BLAN-0074 "Decotes".


\begin{thebibliography}{57}
\providecommand{\natexlab}[1]{#1}
\providecommand{\url}[1]{\texttt{#1}}
\expandafter\ifx\csname urlstyle\endcsname\relax
  \providecommand{\doi}[1]{doi: #1}\else
  \providecommand{\doi}{doi: \begingroup \urlstyle{rm}\Url}\fi

\bibitem[Albera et~al.(2004)Albera, Ferreol, Comon, and
  Chevalier]{AlbeFCC04:laa}
L.~Albera, A.~Ferreol, P.~Comon, and P.~Chevalier.
\newblock Blind identification of overcomplete mixtures of sources ({BIOME}).
\newblock \emph{Lin. Algebra Appl.}, 391:\penalty0 1--30, November 2004.

\bibitem[Alexander and Hirschowitz(1992)]{AlHi92a}
J.~Alexander and A.~Hirschowitz.
\newblock La m\'ethode d'{H}orace \'eclat\'ee~: application \`a l'interpolation
  en degr\'e quatre.
\newblock \emph{Inventh. math.}, 107:\penalty0 585--602, 1992.

\bibitem[Bini et~al.(1979)Bini, Capovani, Romani, and Lotti]{BiniCRL79:ipl}
D.~Bini, M.~Capovani, F.~Romani, and G.~Lotti.
\newblock {$O(n^{2.77})$ Complexity for $n \times n$ approximate matrix
  multiplication}.
\newblock \emph{Inform. Process. Lett.}, 8\penalty0 (5):\penalty0 234--235,
  1979.

\bibitem[Bro(1997)]{Bro97:cils}
R.~Bro.
\newblock {P}arafac, tutorial and applications.
\newblock \emph{Chemom. Intel. Lab. Syst.}, 38:\penalty0 149--171, 1997.

\bibitem[B{\"u}rgisser et~al.(1997)B{\"u}rgisser, Clausen, and
  Shokrollahi]{BurgCS97}
Peter B{\"u}rgisser, Michael Clausen, and M.~Amin Shokrollahi.
\newblock \emph{Algebraic complexity theory}, volume 315 of \emph{Grundlehren
  der Mathematischen Wissenschaften [Fundamental Principles of Mathematical
  Sciences]}.
\newblock Springer-Verlag, Berlin, 1997.
\newblock ISBN 3-540-60582-7.
\newblock With the collaboration of Thomas Lickteig.

\bibitem[Cardoso(1998)]{Card98:procieee}
J.~F. Cardoso.
\newblock Blind signal separation: statistical principles.
\newblock \emph{Proc. of the IEEE}, 90:\penalty0 2009--2025, October 1998.
\newblock special issue, R.W. Liu and L. Tong eds.

\bibitem[Carlini(2005)]{carlini-aggm-2005}
E.~Carlini.
\newblock Reducing the number of variables of a polynomial.
\newblock In M.~Elkadi, B.~Mourrain, and R.~Piene, editors, \emph{Algebraic
  geometry and geometric modeling}, pages 237--247. Springer, 2005.

\bibitem[Chevalier(1999)]{Chev99:SP}
P.~Chevalier.
\newblock Optimal separation of independent narrow-band sources - concept and
  performance.
\newblock \emph{Signal Processing, Elsevier}, 73\penalty0 (1):\penalty0 27--48,
  February 1999.
\newblock special issue on blind separation and deconvolution.

\bibitem[Chevalier et~al.(2005)Chevalier, Albera, Ferreol, and
  Comon]{ChevAFC05:ieeesp}
P.~Chevalier, L.~Albera, A.~Ferreol, and P.~Comon.
\newblock On the virtual array concept for higher order array processing.
\newblock \emph{IEEE Trans. Sig. Proc.}, 53\penalty0 (4):\penalty0 1254--1271,
  April 2005.

\bibitem[Cichocki and Amari(2002)]{CichA02}
A.~Cichocki and S-I. Amari.
\newblock \emph{Adaptive Blind Signal and Image Processing}.
\newblock Wiley, New York, 2002.

\bibitem[Comas and Seiguer(2001)]{cm-binary-arxiv-2001}
G.~Comas and M.~Seiguer.
\newblock On the rank of a binary form, 2001.
\newblock URL \url{http://arxiv.org/abs/math/0112311}.

\bibitem[Comon(1992)]{Como92:elsevier}
P.~Comon.
\newblock Independent {C}omponent {A}nalysis.
\newblock In J-L. Lacoume, editor, \emph{Higher Order Statistics}, pages
  29--38. Elsevier, Amsterdam, London, 1992.

\bibitem[Comon and Mourrain(1996)]{cm-sp-1996}
P.~Comon and B.~Mourrain.
\newblock {Decomposition of quantics in sums of powers of linear forms}.
\newblock \emph{Signal Processing}, 53\penalty0 (2-3):\penalty0 93--107, 1996.

\bibitem[Comon and Rajih(2006)]{ComoR06:SP}
P.~Comon and M.~Rajih.
\newblock Blind identification of under-determined mixtures based on the
  characteristic function.
\newblock \emph{Signal Processing}, 86\penalty0 (9):\penalty0 2271--2281,
  September 2006.

\bibitem[Comon et~al.(2008)Comon, Golub, Lim, and Mourrain]{cglm-simax-2008}
P.~Comon, G.~Golub, L-H. Lim, and B.~Mourrain.
\newblock Symmetric tensors and symmetric tensor rank.
\newblock \emph{{SIAM} Journal on Matrix Analysis Appl.}, 30\penalty0
  (3):\penalty0 1254--1279, 2008.

\bibitem[Cox et~al.(1997)Cox, Little, and O'Shea]{CLO}
D.~Cox, J.~Little, and D.~O'Shea.
\newblock \emph{Ideals, Varieties, and Algorithms}.
\newblock Undergraduate Texts in Mathematics. Springer-Verlag, New York, 2nd
  edition, 1997.

\bibitem[Cox et~al.(2005)Cox, Little, and O'Shea]{CLO2}
D.~Cox, J.~Little, and D.~O'Shea.
\newblock \emph{Using Algebraic Geometry}.
\newblock Number 185 in Graduate Texts in Mathematics. Springer, New York, 2nd
  edition, 2005.

\bibitem[Curto and Fialkow(1991)]{cf-hjm-1991}
R.~E. Curto and L.~A. Fialkow.
\newblock {Recursiveness, positivity, and truncated moment problems}.
\newblock \emph{Houston J. Math}, 17\penalty0 (4):\penalty0 603--635, 1991.

\bibitem[de~Lathauwer and Castaing(2007)]{DelaC07:SP}
L.~de~Lathauwer and J.~Castaing.
\newblock Tensor-based techniques for the blind separation of ds-cdma signals.
\newblock \emph{Signal Processing}, 87\penalty0 (2):\penalty0 322--336,
  February 2007.

\bibitem[de~Lathauwer et~al.(2007)de~Lathauwer, Castaing, and
  Cardoso]{DelaCC07:ieeesp}
L.~de~Lathauwer, J.~Castaing, and J.F. Cardoso.
\newblock Fourth-order cumulant-based identification of underdetermined
  mixtures.
\newblock \emph{IEEE Trans. Sig. Proc.}, 55\penalty0 (6):\penalty0 2965--2973,
  June 2007.

\bibitem[de~Leeuw(1978)]{Dele78:psy}
J.~de~Leeuw.
\newblock A new computational method to fit the weighted euclidean distance
  model.
\newblock \emph{Psychometrika}, 43\penalty0 (4):\penalty0 479--490, December
  1978.

\bibitem[Dogan and Mendel(1995)]{DogaM95a:ieeesp}
M.~C. Dogan and J.~Mendel.
\newblock Applications of cumulants to array processing .{I}. aperture
  extension and array calibration.
\newblock \emph{IEEE Trans. Sig. Proc.}, 43\penalty0 (5):\penalty0 1200--1216,
  May 1995.

\bibitem[Donoho and Huo(2001)]{Dono01:ieeeit}
D.~L. Donoho and X.~Huo.
\newblock Uncertainty principles and ideal atomic decompositions.
\newblock \emph{IEEE Trans. Inform. Theory}, 47\penalty0 (7):\penalty0
  2845--2862, November 2001.

\bibitem[Elkadi and Mourrain(2007)]{EM08}
M.~Elkadi and B.~Mourrain.
\newblock \emph{Introduction {\'a} la r\'esolution des syst\'emes polynomiaux},
  volume~59 of \emph{Math\'matiques et Applications}.
\newblock Springer, 2007.
\newblock ISBN 978-3-540-71646-4.

\bibitem[Ferreol and Chevalier(2000)]{FerrC00:ieeesp}
A.~Ferreol and P.~Chevalier.
\newblock On the behavior of current second and higher order blind source
  separation methods for cyclostationary sources.
\newblock \emph{IEEE Trans. Sig. Proc.}, 48:\penalty0 1712--1725, June 2000.
\newblock erratum in vol.50, pp.990, Apr. 2002.

\bibitem[Flury(1984)]{Flur84:jasa}
B.~N. Flury.
\newblock Common principal components in k groups.
\newblock \emph{Journal of the American Statistical Association}, 79\penalty0
  (388):\penalty0 892--898, 1984.

\bibitem[Golub and Loan(1983)]{gvl-book-1983}
G.~H. Golub and C.~F.~Van Loan.
\newblock \emph{Matrix computations}.
\newblock John Hopkins, Baltimore MD, 1983.

\bibitem[Golub and Van~Loan(1996)]{gvl-book-1996}
G.H. Golub and C.F. Van~Loan.
\newblock \emph{{Matrix Computations}}.
\newblock Johns Hopkins University Press, 1996.

\bibitem[Harris(1992)]{harris-book}
J.~Harris.
\newblock \emph{{Algebraic Geometry: A First Course}}.
\newblock Number 133 in Graduate Texts in Mathematics. Springer-Verlag, 1992.

\bibitem[Harshman and Lundy(1994)]{HarsL94:csda}
R.~A. Harshman and M.~Lundy.
\newblock {PARAFAC}: Parallel factor analysis.
\newblock \emph{Computational Stat. Data Analysis}, 18:\penalty0 39--72, 1994.

\bibitem[Iarrobino and Kanev(1999)]{ia-book-1999}
A.~Iarrobino and V.~Kanev.
\newblock \emph{Power sums, Gorenstein algebras, and determinantal loci},
  volume 1721 of \emph{Lecture Notes in Computer Science}.
\newblock Springer-Verlag, Berlin, 1999.

\bibitem[Jiang and Sidiropoulos(2004)]{JianS04:ieeesp}
T.~Jiang and N.~Sidiropoulos.
\newblock Kruskal's permutation lemma and the identification of
  {CANDECOMP/PARAFAC} and bilinear models.
\newblock \emph{IEEE Trans. Sig. Proc.}, 52\penalty0 (9):\penalty0 2625--2636,
  September 2004.

\bibitem[Kiers and Krijnen(1991)]{KierK91:psy}
H.~A.~L. Kiers and W.~P. Krijnen.
\newblock An efficient algorithm for {P}arafac of three-way data with large
  numbers of observation units.
\newblock \emph{Psychometrika}, 56:\penalty0 147, 1991.

\bibitem[Kogan and Maza(2002)]{KogMaz-issac-2002}
I.~A. Kogan and M.~Moreno Maza.
\newblock Computation of canonical forms for ternary cubics.
\newblock In \emph{Proc. Int. Symp. on Symbolic and Algebraic Computation},
  pages 151--160, New York, NY, USA, 2002. ACM.

\bibitem[Kruskal(1977)]{Krus77:laa}
J.~B. Kruskal.
\newblock Three-way arrays: Rank and uniqueness of trilinear decompositions.
\newblock \emph{Linear Algebra and Applications}, 18:\penalty0 95--138, 1977.

\bibitem[Kung and Rota(1984)]{KungR84:ams}
J.~Kung and G.~Rota.
\newblock The invariant theory of binary forms.
\newblock \emph{Bulletin of the American Mathematical Association}, pages
  27--85, 1984.

\bibitem[Landsberg(2008)]{Land08:bams}
J.~Landsberg.
\newblock Geometry and the complexity of matrix multiplication.
\newblock \emph{Bull. Amer. Math. Soc.}, 45\penalty0 (2):\penalty0 247--284,
  April 2008.

\bibitem[Laurent(2005)]{laurent-ams-2005}
M.~Laurent.
\newblock {Revisiting two theorems of Curto and Fialkow on moment matrices}.
\newblock \emph{Proc. Amer. Math. Soc}, 133:\penalty0 2965--2976, 2005.

\bibitem[Laurent(2008)]{laurent-ima-2008}
M.~Laurent.
\newblock Sums of squares, moment matrices and optimization over polynomials.
\newblock In M.~Putinar and S.~Sullivant, editors, \emph{IMA volume Emerging
  Applications of Algebraic Geometry}. IMA, 2008.
\newblock (to appear).

\bibitem[Laurent and Mourrain(2008)]{ML08}
M.~Laurent and B.~Mourrain.
\newblock A sparse flat extension theorem for moment matrices.
\newblock http://fr.arxiv.org/abs/0812.2563, 2008.

\bibitem[Macaulay(1916)]{Mac16}
{F}.{S}. Macaulay.
\newblock \emph{{The Algebraic Theory of Modular Systems}}.
\newblock {Cambridge Univ. Press}, 1916.

\bibitem[McCullagh(1987)]{Mccu87}
P.~McCullagh.
\newblock \emph{Tensor Methods in Statistics}.
\newblock Monographs on Statistics and Applied Probability. Chapman and Hall,
  1987.

\bibitem[Mourrain(1999)]{mourrain-aaecc-1999}
B.~Mourrain.
\newblock A new criterion for normal form algorithms.
\newblock In M.~Fossorier, H.~Imai, S.~Lin, and A.~Poli, editors, \emph{Proc.
  Applic.\ Algebra in Engineering, Communic.\ and Computing}, volume 1719 of
  \emph{Lecture Notes in Computer Science}, pages 430--443. Springer, Berlin,
  1999.

\bibitem[Mourrain and Pan(2000)]{mp-jcomplexity-2000}
B.~Mourrain and V.Y. Pan.
\newblock {Multivariate Polynomials, Duality, and Structured Matrices}.
\newblock \emph{Journal of Complexity}, 16\penalty0 (1):\penalty0 110--180,
  2000.

\bibitem[Oldenburger(1934)]{olden-annals-1934}
R.~Oldenburger.
\newblock Composition and rank of n-way matrices and multilinear forms.
\newblock \emph{The Annals of Mathematics}, 35\penalty0 (3):\penalty0 622--653,
  1934.

\bibitem[Paatero(1999)]{Paat99:jcgs}
P.~Paatero.
\newblock The multilinear engine: A table-driven, least squares program for
  solving multilinear problems, including the n-way parallel factor analysis
  model.
\newblock \emph{Journal of Computational and Graphical Statistics}, 8\penalty0
  (4):\penalty0 854--888, December 1999.

\bibitem[Pham and Cardoso(2001)]{PhamC01:ieeesp}
D.~T. Pham and J-F. Cardoso.
\newblock Blind separation of instantaneous mixtures of nonstationary sources.
\newblock \emph{IEEE Trans. Sig. Proc.}, 49\penalty0 (9):\penalty0 1837--1848,
  September 2001.

\bibitem[Sidiropoulos et~al.(2000)Sidiropoulos, Giannakis, and
  Bro]{SidiGB00:ieeesp}
N.~D. Sidiropoulos, G.~B. Giannakis, and R.~Bro.
\newblock Blind {PARAFAC} receivers for {DS-CDMA} systems.
\newblock \emph{IEEE Trans. on Sig. Proc.}, 48\penalty0 (3):\penalty0 810--823,
  March 2000.

\bibitem[Smilde et~al.(2004)Smilde, Bro, and Geladi]{SmilBG04}
A.~Smilde, R.~Bro, and P.~Geladi.
\newblock \emph{Multi-Way Analysis}.
\newblock Wiley, 2004.

\bibitem[Strassen(1983)]{Stra83:laa}
V.~Strassen.
\newblock Rank and optimal computation of generic tensors.
\newblock \emph{Linear Algebra Appl.}, 52:\penalty0 645--685, July 1983.

\bibitem[Swami et~al.(1994)Swami, Giannakis, and Shamsunder]{SwamGS94:ieeesp}
A.~Swami, G.~Giannakis, and S.~Shamsunder.
\newblock Multichannel {ARMA} processes.
\newblock \emph{IEEE Trans. Sig. Proc.}, 42\penalty0 (4):\penalty0 898--913,
  April 1994.

\bibitem[Sylvester(1886)]{sylv-cr-1886}
J.~J. Sylvester.
\newblock Sur une extension d'un th{\'{e}}or{\`{e}}me de {C}lebsch relatif aux
  courbes du quatri{\`{e}}me degr{\'{e}}.
\newblock \emph{Comptes Rendus, Math.\ Acad.\ Sci.\ Paris}, 102:\penalty0
  1532--1534, 1886.

\bibitem[ten Berge et~al.(2004)ten Berge, Sidiropoulos, and
  Rocci]{TenbSR04:laa}
J.~M.~F. ten Berge, N.~D. Sidiropoulos, and R.~Rocci.
\newblock Typical rank and {INDSCAL} dimensionality for symmetric three-way
  arrays of order {Ix2x2} or {Ix3x3}.
\newblock \emph{Linear Algebra Appl.}, 388:\penalty0 363--377, 2004.

\bibitem[van~der Veen and Paulraj(1996)]{VandP96:ieeesp}
A.~J. van~der Veen and A.~Paulraj.
\newblock An analytical constant modulus algorithm.
\newblock \emph{IEEE Trans. Sig. Proc.}, 44\penalty0 (5):\penalty0 1136--1155,
  May 1996.

\bibitem[Yeredor(2002)]{Yere02:ieeesp}
A.~Yeredor.
\newblock Non-orthogonal joint diagoinalization in the {LS} sense with
  application in blind source separation.
\newblock \emph{IEEE Trans. Sig. Proc.}, 50\penalty0 (7):\penalty0 1545--1553,
  2002.

\bibitem[Zak(1993)]{zak-ams-1993}
F.~L. Zak.
\newblock \emph{Tangents and secants of algebraic varieties}, volume 127 of
  \emph{Translations of Mathematical Monographs}.
\newblock AMS, Providence, RI, 1993.

\bibitem[Ziehe et~al.(2004)Ziehe, Nolte, and M{\"u}ller]{ZiehNM04:jmlr}
A.~Ziehe, G.~Nolte, and K.~R. M{\"u}ller.
\newblock A fast algorithm for joint diagonalization with non orthogonal
  transformations and its application to blind source separation.
\newblock \emph{Journal of Machine Learning Research}, 5:\penalty0 777--800,
  December 2004.

\end{thebibliography}

\appendix

\newpage

\section{Ternary cubics}

As an application, we present the decomposition of all the types of ternary cubics.
The decomposition allows us to classify, up to projective transformations of the
variables, homogeneous polynomials of  degree three in three variables, for instance with the help of the algorithm described in  \cite{cm-sp-1996}. 
For another algorithm for decomposing ternary cubics, based on the method of moving
frames and on triangular decompositions of algebraic varieties, we refer the
reader to \cite{KogMaz-issac-2002}. Two polynomial are equivalent in this
classicifation if there exists a varaibles invertible trnasfmation which maps
one polynomial to the other.

The classification algorithm goes as follows. Given a ternary cubic, we
compute its decomposition as a sum of powers of linear forms.
We have the following cases:
\begin{itemize}
\item  If the rank is one then the polynomial is a $3^{\mathrm{rd}}$
  power of a linear form,
  that is, it is equivalent to $x_0^3$.
\item If the rank is 2, then the polynomial is equivalent to $x_{0}^{3}+
  x_{1}^{3}$ and is in the orbit 
  of $x_0 x_1 (x_0 + x_1)$. In fact, the decomposition of the latter polynomial is 
  \begin{eqnarray*}
    \left( -260712-{\frac {1628000}{9}}\,i\sqrt {3} \right)  
    \left(  \left( -{\frac {125}{16492}}-{\frac {17}{16492}}\,i\sqrt {3} \right) 
      x_{{0}}+ \left( -{\frac {22}{4123}}+{\frac {27}{8246}}\,i\sqrt {3} \right) x_{{1}} \right)^{3} \\
    + \left( -260712+{\frac {1628000}{9}}\,i\sqrt {3} \right)  
    \left(  \left( -{\frac {125}{16492}}+{\frac {17}{16492}}\,i\sqrt {3} \right) x_{{0}}
      + \left( -{\frac {22}{4123}}-{\frac {27}{8246}}\,i\sqrt {3} \right) x_{{1}} \right) ^{3}.
  \end{eqnarray*}
\item  If the rank is 3, then the polynomial is either in the orbit of
  $x_0^2 x_1$ or in the orbit of $x_0^3  + x_1^3  + x_2^3$.
  To identify the orbit, it suffice to check if the polynomial is square-free or
  not (that is, check whether the gcd between the polynomial and one of its derivatives is 1). If it is not square-free then it is in the orbit of $x_0^2 x_1$.
  Otherwise it is in the orbit of  $x_0^3  + x_1^3  + x_2^3$.
  
  The decomposition of $x_0^2 x_1$ is 
  \begin{displaymath}
    \begin{array}{r}
      -0.15563-0.54280\, i ( (-0.95437-0.08483\, i) x_0 + (-0.00212+0.28631\, i) x_1)^3 \\ 
      -0.15723+0.54841\, i ( (-0.95111+0.09194\, i) x_0 + (-0.00222-0.29484\, i) x_1)^3 \\ 
      -0.04320 ( (-0.49451) x_0 + (-0.86917) x_1)^3 .
    \end{array}
  \end{displaymath}
  %
  %
\item If the rank is 4, then our polynomial is generic.
  As an example, consider the  polynomial
  $150\,{x_{0}}^{2}x_{2}+{x_{1}}^{2}x_{2}+{x_{2}}^{3}-12\,{x_{0}}^{3}$;
  a decomposition of which is
  \begin{displaymath}
    \begin{array}{r}
      0.53629 ( +0.34496 x_0  +0.71403 x_1  +0.60923 x_2 )^3 \\ 
      -195.64389 ( -0.99227 x_0  +0.00286 x_1  -0.12403 x_2 )^3 \\ 
      +211.45588 ( -0.99282 x_0  +0.00311 x_1  +0.11962 x_2 )^3 \\ 
      +0.52875 ( -0.34600 x_0  -0.71671 x_1  +0.60549 x_2 )^3.
    \end{array}
  \end{displaymath}

\item If the rank is 5, then the polynomial is of maximal rank and it is in the
  orbit of $x_0^2 x_1 + x_0 x_2^2$, a decomposition of which, is 
  \begin{eqnarray*}
    +0.28100 ( +0.06322 x_0  -0.99748 x_1  +0.03224 x_2 )^3 \\ 
    +0.97839 ( +0.14391 x_0  +0.50613 x_1  +0.85036 x_2 )^3 \\ 
    +0.44877 ( +0.73493 x_0  +0.56369 x_1  -0.37702 x_2 )^3 \\ 
    +(-0.97396-0.94535\, i) ( 0.45304 x_0 + (-0.60752+0.14316\, i) x_1 + (-0.52915+0.35382\, i) x_2 )^3 \\ 
    +(-0.97396+0.94535\, i) ( 0.45304 x_0 + (-0.60752+0.14316\, i) x_1 + (-0.52915-0.35382\, i) x_2 )^3 .
  \end{eqnarray*}

  %
\end{itemize}

\section{An example of extreme rank}
\label{sec:ex-max-rank}

In this section we present in detail the decomposition of a ternary cubic of
maximal rank, that is 5. Consider the polynomial 
\begin{displaymath}
  x_0^2 x_1 + x_0 x_2^2.
\end{displaymath}

The matrix of the quotient algebra is
\begin{displaymath}
  \left[ \begin {array}{cccccccccc} 
      0&\frac{1}{3}&0&0&0&\frac{1}{3}&0&0&0&0\\
      \frac{1}{3}&0&0&0&0&0&h_{4,0,0}&h_{3,1,0}&h_{2,2,0}&h_{1,3,0}\\
      0&0&\frac{1}{3}&0&0&0&h_{3,1,0}&h_{2,2,0}&h_{1,3,0}&h_{0,4,0}\\
      0&0&0&h_{4,0,0}&h_{3,1,0}&h_{2,2,0}&h_{5,0,0}&h_{4,1,0}&h_{3,2,0}&h_{2,3,0}\\
      0&0&0&h_{3,1,0}&h_{2,2,0}&h_{1,3,0}&h_{4,1,0}&h_{3,2,0}&h_{2,3,0}&h_{1,4,0}\\
      \frac{1}{3}&0&0&h_{2,2,0}&h_{1,3,0}&h_{0,4,0}&h_{3,2,0}&h_{2,3,0}&h_{1,4,0}&h_{0,5,0}\\
      0&h_{4,0,0}&h_{3,1,0}&h_{5,0,0}&h_{4,1,0}&h_{3,2,0}&h_{6,0,0}&h_{5,1,0}&h_{4,2,0}&h_{3,3,0}\\
      0&h_{3,1,0}&h_{2,2,0}&h_{4,1,0}&h_{3,2,0}&h_{2,3,0}&h_{5,1,0}&h_{4,2,0}&h_{3,3,0}&h_{2,4,0}\\
      0&h_{2,2,0}&h_{1,3,0}&h_{3,2,0}&h_{2,3,0}&h_{1,4,0}&h_{4,2,0}&h_{3,3,0}&h_{2,4,0}&h_{1,5,0}\\
      0&h_{1,3,0}&h_{0,4,0}&h_{2,3,0}&h_{1,4,0}&h_{0,5,0}&h_{3,3,0}&h_{2,4,0}&h_{1,5,0}&h_{0,6,0}
    \end {array} 
  \right],
\end{displaymath}
and the matrices $\Delta_0$, $\Delta_1$ and $\Delta_2$ are
\begin{displaymath}
  \left[ 
    \begin {array}{ccccc} 
      0&1/3&0&0&0\\
      1/3&0&0&0&0\\
      0&0&1/3&0&0\\
      0&0&0&h_{4,0,0}&h_{3,1,0}\\
      0&0&0&h_{3,1,0}&h_{2,2,0}
    \end {array} 
  \right],
  \left[ 
    \begin {array}{ccccc} 
      1/3&0&0&0&0\\
      0&0&0&h_{4,0,0}&h_{3,1,0}\\
      0&0&0&h_{3,1,0}&h_{2,2,0}\\
      0&h_{4,0,0}&h_{3,1,0}&h_{5,0,0}&h_{4,1,0}\\
      0&h_{3,1,0}&h_{2,2,0}&h_{4,1,0}&h_{3,2,0}
    \end {array} 
  \right] ,
  \left[ 
    \begin {array}{ccccc} 
      0&0&1/3&0&0\\
      0&0&0&h_{3,1,0}&h_{2,2,0}\\
      1/3&0&0&h_{2,2,0}&h_{1,3,0}\\
      0&h_{3,1,0}&h_{2,2,0}&h_{4,1,0}&h_{3,2,0}\\
      0&h_{2,2,0}&h_{1,3,0}&h_{3,2,0}&h_{2,3,0}
    \end {array} 
  \right] .
\end{displaymath}

If we form the matrix equation 
\begin{displaymath}
  \mat{M}_{x_i} \mat{M}_{x_j} - \mat{M}_{x_j} \mat{M}_{x_i} = 
  \Delta_1 \Delta_0^{-1} \Delta_2 \Delta_0^{-1}
  - \Delta_2 \Delta_0^{-1} \Delta_1 \Delta_0^{-1}
  = \mat{O}.      
\end{displaymath}
then we have a system of 8 equations in 8 unknowns.
The unknowns are  
$$\left\{
  h_{{5,0,0}},h_{{4,1,0}},h_{{4,0,0}},h_{{3,1,0}},h_{{2,2,0}},h_{{1,3,0}},h_{{3,2,0}},h_{{2,3,0}}
\right\}.$$
It turns out that the system is not zero dimensional, and that we can choose (randomly)
the values of five of them, i.e.
$\set{  h_{{1,3,0}}=3,h_{{3,1,0}}=1,h_{{2,2,0}}=2,h_{{4,1,0}}=4,h_{{4,0,0}}=5 }$.
Working as in the other examples we end up with the decomposition

\begin{eqnarray*}
  +  0.000071( x_{0} -15.778 x_{1} + 0.510 x_{2} )^3\\
  + 0.002916( x_{0} + 3.517 x_{1} + 5.909 x_{2} )^3\\
  + 0.178137( x_{0} + 0.767 x_{1} - 0.513 x_{2} )^3 \\
  (-0.09056 -0.0879\,i)( x_{0} + (-1.341+0.316\,i )x_{1} +(-1.168+0.781\,i)x_{2})^3 \\
  (-0.09056 +0.0879\,i)( x_{0} + (-1.341+0.316\,i )x_{1} +(-1.168-0.781\,i)x_{2})^3.
\end{eqnarray*}


\end{document}